\journal{Automatica}
\colorlet{RefColor}{green!50!black}
\colorlet{LinkColor}{red!50!black}
\pgfplotsset{compat=1.16}
\pgfplotsset{compat=1.16}
\newtheorem{lem}{Lemma}
\newtheorem{thm}{Theorem}
\newtheorem{remark}{Remark}
\newtheorem{proposition}{Proposition}
\newtheorem{definition}{Definition}
\newcommand{\pushright}[1]{\ifmeasuring@#1\else\omit\hfill$\displaystyle#1$\fi\ignorespaces}
\newcommand{\pushleft}[1]{\ifmeasuring@#1\else\omit$\displaystyle#1$\hfill\fi\ignorespaces}
\newcommand{\hinfheight}{0.17\textwidth}
\newcommand{\COMPlib}{\textit{COMPl$_e$ib} }
\begin{document}

\begin{frontmatter}	
	\title{Fixed-Order H-Infinity Controller Design for Port-Hamiltonian 
Systems\tnoteref{t1,t2}}
    \tnotetext[t1]{This research has been supported by the German Research 
Foundation (DFG) within the project 424221635.}
    \tnotetext[t2]{\textbf{CRediT author statement:} \textbf{Paul Schwerdtner} Conceptualization, Methodology, Software, Data Curation, Writing -- Original Draft, Visualization, \textbf{Matthias Voigt:} Conceptualization, Writing -- Review \& Editing, Supervision, Funding Acquisition}
    
	\author[TUB]{Paul Schwerdtner\corref{COR}}
	\ead{schwerdt@math.tu-berlin.de}	
	\author[FUS]{Matthias Voigt}\ead{matthias.voigt@fernuni.ch}
	
	\affiliation[TUB]{organization={TU Berlin, Institute of Mathematics},
		addressline={Stra{\ss}e des 17.~Juni 136}, 
		city={10623 Berlin},
		country={Germany}}
	\affiliation[FUS]{organization={UniDistance Suisse},
	addressline={Schinerstrasse 18}, 
	city={3900 Brig},
	country={Switzerland}}			
	\cortext[COR]{Corresponding author.}

	\begin{abstract}
          We present a new fixed-order H-infinity controller design method for potentially large-scale port-Hamiltonian (pH) plants. Our method computes controllers that are also pH (and thus passive) such that the resulting closed-loop systems is again passive, which ensures closed-loop stability simply from the structure of the plant and controller matrices. In this way, we can avoid computationally expensive eigenvalue computations that would otherwise be necessary. In combination with a sample-based objective function which allows us to avoid multiple evaluations of the H-infinity norm (which is typically the main computational burden in fixed-order H-infinity controller synthesis), this makes our method well-suited for plants with a high state-space dimension.

          In our numerical experiments, we show that applying a passivity-enforcing post-processing step after using well-established H-infinity synthesis methods often leads to a deteriorated H-infinity performance. In constrast to that, our method computes pH controllers, that are automatically passive and simultaneously aim to minimize the H-infinity norm of the closed-loop transfer function. Moreover, our experiments show that for large-scale plants, our method is significantly faster than the well-established fixed-order H-infinity controller synthesis methods.
	\end{abstract}
	
	\begin{keyword}
		port-Hamiltonian systems \sep large-scale systems \sep robust control 
\sep H-infinity control
\sep fixed-order controllers
	\end{keyword}
\end{frontmatter}



\section{Introduction}

The port-Hamiltonian (pH) modeling paradigm is used for energy-based modeling of 
complex processes accross several physical domains including electrical 
systems~\citep{MehMOS18}, flow-problems~\citep{HauMMMMRS19}, and mechanical 
multi-body systems~\citep[Example 12]{BeaMXZ18}. The two main benefits of pH 
models are the intuitive energy-based interconnection of systems from different 
physical domains and the beneficial properties such as \emph{passivity} that 
follow directly from the model structure. In recent years, the increased demand 
for accurate models of complex multi-physical processes has led to a 
wide-spread utilization of pH models and subsequently system theoretical tools 
such as model order reduction and system identification have been adapted to pH 
systems; see~\cite{MehU22} for 
an overview of recent methods and applications. The available pH controller 
design strategies developed in~\cite{OrtVCA08, RamLMC16, ZhaBOLS17} are mostly based 
on energy-shaping and do not aim for classical $\htwo$ or $\hinf$ optimal 
controllers. In this work, we present a new algorithm for fixed-order $\hinf$ 
controller design for pH systems.

Well-established numerical methods for classical $\hinf$ controller design are
based on the repeated solution of linear matrix inequalities (LMIs) or algebraic 
Riccati equations (AREs); see, e.\,g., the monographs~\cite{Fra87, ZhoDG96, 
SkoP05}. The AREs are often solved using structured generalized eigenvalue 
problems~\citep{BenBMX02,BenBLMX11}. LMI- or ARE-based control is widely used for 
plants with small state-dimension (also called plant \emph{order}). The main 
drawback of these methods is that the obtained controllers have the same order 
as the plant (and are thus often called \emph{full-order} controllers). This is 
undesired for plants with higher order, since the implementation of 
high controllers is often impractical. For some system classes, low-rank 
solvers for AREs have enabled the computation of low-order 
controllers~\citep{BenHW22b}. 
Otherwise, a subsequent \emph{controller order reduction} is 
necessary~\citep{MusG91, AndL89}. Note that full-order ARE-based $\hinf$ control 
for pH systems is currently under investigation; see~\cite{Breiten2022} for 
a recent preprint.

In the early 2000s, \emph{fixed-order} $\hinf$ controller synthesis methods have 
been established, in which the set of controllers of a fixed and typically low 
order is parameterized and then the controller parameters are optimized using 
gradient-based numerical optimization. There exist two implementations of this 
approach, \texttt{HIFOO}\footnote{available 
at~\url{https://cs.nyu.edu/~overton/software/hifoo/}}~\citep{Burke2006Hifoo} and 
\texttt{hinfstruct}\footnote{available in MATLAB's Robust Control 
Toolbox}~\citep{Apkarian2006Hinfstruct}. Fixed-order $\hinf$ synthesis leads to 
constraint nonsmooth and nonconvex optimization problems such that 
only locally optimal controllers can be computed. In contrast to that, 
full-order controllers are typically globally optimal\footnote{For 
numerical reasons, full-order controllers are often designed as 
suboptimal controllers but with near globally optimal performance.}. 
However, both packages were applied successfully in industrial applications, 
see, e.\,g.,~\cite{WanC09,GabAN10,RobBBPA10,RavN12}. In each iteration of the 
optimization, the $\hinf$ norm of the closed-loop system must be evaluated and 
its stability must be checked. For systems with high state-space dimension, 
this 
can lead to prohibitively high computational costs.

To keep the computational costs manageable for large-scale systems,
\texttt{HIFOO} is extended in~\cite{MitO15} by using a new method for the 
$\hinf$ norm computation of large-scale systems~\citep{MitO16} and 
in~\cite{BenMO18} by using a reduced-order approximation to the given 
large-scale plant. Recently, in~\cite{WerOP22}, controllers are computed using 
several reduced-order plants at different fidelity levels. To ensure stability, 
in the large-scale setting, iterative methods can be employed as 
in~\cite{BenMO18} to compute the rightmost eigenvalues of the closed-loop 
system matrix. When iterative methods are used, technically, stability cannot 
be ensured, as the convergence to the rightmost eigenvalue of the closed-loop 
system matrix is not guaranteed. Moreover, in~\cite{BenMO18} it is noted that 
the stability constraint in the large-scale $\hinf$ control problem poses a 
particular challenge during optimization, which requires several restarts of the 
$\hinf$ optimization in~\cite[Algorithm~2]{BenMO18}. Moreover, performing 
large-scale $\hinf$ norm evaluations during controller synthesis as 
in~\cite{MitO15} is reported to potentially lead to a premature halt of the 
optimization~\citep[Section~2]{BenMO18}.

In this article, by using the pH paradigm, we can use a different strategy to
ensure stability of large-scale closed-loop systems based on the following 
well-known fact: The negative feedback interconnection of two pH systems (see 
\Cref{sec:pHsystems}) is 
again pH and thus Lyapunov stable. Therefore, we propose to use a pH controller 
to ensure Lyapunov stability of the closed-loop system directly from the model 
structure. In fact, since the pH structure can be encoded into our controller 
parameterization, we can use unconstrained optimization.
To avoid computing the $\hinf$ norm of the potentially 
large-scale closed-loop systems repeatedly, we use the sample-based optimization 
approach developed in~\cite{SchV22, SchV21} for $\hinf$-inspired model order 
reduction. In this way, we may compute stabilizing controllers 
for pH systems that only require the 
evaluation of the transfer function of the given plant.

Our article is organized as follows. In the next section, we recall the 
fixed-order $\hinf$ synthesis problem. After that, we briefly explain pH systems 
and the connections between pH and passive systems. In \Cref{sec:ourmethod}, we 
explain our approach and in \Cref{sec:numerics} we provide a comparison to the 
other fixed-order $\hinf$ synthesis methods.

\section{$\hinf$ Controller Synthesis} \label{sec:hinfsyn}
The setup of the classical $\hinf$ controller design problem consisting of a plant $\bP$ and a controller $\bK$ is illustrated in \Cref{fig:controller}. The plant is defined as
\begin{align*}
  \mathbf{P}:
  \begin{bmatrix}
    \dot x(t) \\ z(t) \\ y(t)
  \end{bmatrix}
  =
  \begin{bmatrix}
    A & B_1 & B_2 \\
    C_1 & D_{11} & D_{12} \\
    C_2 & D_{21} & D_{22}
  \end{bmatrix}
  \begin{bmatrix}
    x(t) \\ w(t) \\ u(t)
  \end{bmatrix}
\end{align*}
with $A \in \R^{n \times n}$, $B_j \in \R^{n \times m_j}$, $C_i \in \R^{p_i \times n}$, and $D_{ij} \in \R^{p_i \times m_j}$ for $i,\,j = 1,\,2$. For $t \in [0, \infty)$, $x$ denotes the \emph{state}, $u$ and $w$ denote the \emph{control} and \emph{disturbance} input, respectively, whereas $y$ and $z$ denote the \emph{measured} and \emph{performance} output, respectively.
A controller \emph{of order $k$} is given by
\begin{align*}
  \bK:
  \begin{bmatrix}
    \rdx(t) \\ \red{y}(t)
  \end{bmatrix}
  =
  \begin{bmatrix}
    \rA & \rB \\
    \rC & \rD
  \end{bmatrix}
  \begin{bmatrix}
    \rx(t) \\ \red{u}(t)
  \end{bmatrix}
\end{align*}
with $\rA \in \R^{\nK \times \nK}$, $\rB \in \R^{\nK \times p_2}$, $\rC \in \R^{m_2 \times \nK}$, and $\rD \in \R^{m_2 \times p_2}$, and is coupled with the plant $\bP$ by the coupling conditions $\red{u}(\cdot) = y(\cdot)$ and $u(\cdot) = \red{y}(\cdot)$.

\begin{figure}
  \centering
  \begin{tikzpicture}[>=Latex]
    \node[draw=none,fill=none] (begin) {};
    \node[draw,rectangle,minimum height = 1.2cm,minimum width = 
      2cm,thick,xshift = 3cm] (plant) at (begin) {$\bP$};
    \node[draw,rectangle,minimum height = 0.8cm,minimum width = 
      2cm,thick,yshift=-1.5cm] (controller) at (plant) {$\bK$};
    \node[draw=none,fill=none,xshift = 3cm] (end) at (plant) {};
    \draw[->,thick] ($(begin.east)+(0,0.25cm)$) -- ($(plant.west)+(0,0.25cm)$)
      node[xshift=-1.5cm,above] {$w$};
    \draw[->,thick] ($(plant.east)+(0,0.25cm)$) -- ($(end.west)+(0,0.25cm)$)
      node[xshift=-0.5cm,above] {$z$};
    \draw[->,thick] ($(plant.east)+(0,-0.25cm)$) -- 
      ($(plant.east)+(1cm,-0.25cm)$) -- ($(controller.east)+(1cm,0cm)$) -- 
      (controller.east) node[xshift=1cm,yshift=0.625cm,right] {$u_{\bK} = y$};
    \draw[->,thick] (controller.west) -- 
      ($(controller.west)+(-1cm,0cm)$) -- ($(plant.west)+(-1cm,-0.25cm)$) -- 
      ($(plant.west)+(0,-0.25cm)$) node[xshift=-1cm,yshift=-0.625cm,left] {$u 
= 
      y_{\bK}$};
  \end{tikzpicture}
  \caption{Closed-loop system obtained by coupling the plant $\bP$ with the controller $\bK$}\label{fig:controller}
\end{figure}

The goal in $\hinf$ synthesis is to tune the controller, i.\,e., to determine 
matrices $\rA, \rB, \rC$, and $\rD$ such that the performance output is 
minimized in an appropriate norm for all admissible disturbance inputs. This can 
be quantified using the \emph{transfer functions} of plant and controller, that 
are defind as follows
\begin{align}
 P(s) & := \begin{bmatrix}
   P_{11}(s) & P_{12}(s) \\
   P_{21}(s) & P_{22}(s) 
\end{bmatrix} \nonumber \\
      &:= \begin{bmatrix}
      C_1 \\
    C_2
  \end{bmatrix}
  {(sI_n - A)}^{-1} 
\begin{bmatrix} B_1 & B_2 \end{bmatrix}  + 
\begin{bmatrix} D_{11} & D_{12} \\ D_{21} & D_{22} \end{bmatrix}, \nonumber \\
K(s) &:= \rC{(sI_{\nK} - \rA)}^{-1} \rB + \rD. \label{eq:controller}
\end{align}

If the interconnection between $\bP$ and $\bK$ is \emph{well-posed}, i.\,e., $ 
\begin{bsmallmatrix}I_{m_2} & -D_{\bK_*} \\ -D_{22} & I_{p_2} 
\end{bsmallmatrix}$ is invertible, then the closed-loop transfer function 
exists and is given by the \emph{lower linear fractional transformation} of $P$ 
and $K$ defined as
\begin{multline*}
 (P \star K)(s) := P_{11}(s) \\ + 
P_{12}(s)K(s){(I_{p_2}-P_{22}(s)K(s))}^{-1}P_{21}(s).
\end{multline*}

An optimal controller $\bK_* = 
(A_{\bK_*},B_{\bK_*},C_{\bK_*},D_{\bK_*})$ of order $k$ then satisfies
\begin{align*}
  \bK_* =& \argmin\limits_{(\rA, \rB, \rC, \rD)}{\|P \star K\|}_\hinf, \\
         &\text{s.\,t. }
\mathcal{A}(\bK) \text{ is asymptotically stable,}
\end{align*}
where $K$ is as in \eqref{eq:controller}, the $\hinf$ norm of $P \star K$ is 
defined as
\begin{align*}
  {\|P \star K\|}_\hinf &:= \sup\limits_{s \in \C,\,\operatorname{Re}(s) > 0} 
{\|(P \star K)(s)\|}_2 \\
                        &\phantom{:}= \sup\limits_{\omega \in \R} {\|(P \star K)(\ri\omega)\|}_2,
\end{align*}
and $\mathcal{A}(\bK)$ is the closed-loop system matrix, defined as
\begin{equation} \label{eq:clmatrix}
 \begin{bmatrix} A & 0 \\ 0 & \rA \end{bmatrix} + \begin{bmatrix} B_2 & 0 
\\ 
0 & \rB \end{bmatrix} \begin{bmatrix} I_{m_2} & -\rD \\ -D_{22} & 
I_{p_2} \end{bmatrix}^{-1} \begin{bmatrix} 0 & \rC \\ C_2 & 0 
\end{bmatrix}.
\end{equation}
This matrix (resp. the closed-loop system) is called \emph{asymptotically 
stable}, if all its eigenvalues have strictly negative real part.

\section{Port-Hamiltonian Systems}\label{sec:pHsystems}
In this article, we consider plant models, for which the subsystem from controller input to measured output can be written as a pH system.
\begin{definition}\label{def:pH}
  A linear, constant-coefficient dynamical system of the form
  \begin{align}
    \label{eq:phsys}
    \begin{bmatrix}
      \dot x(t) \\ y(t)
    \end{bmatrix}
    =
    \begin{bmatrix}
      (J-R)Q & G-F \\
      {(G+F)}^\top Q & S-N
    \end{bmatrix}
    \begin{bmatrix}
      x(t) \\ u(t)
    \end{bmatrix}
  \end{align}
  is called a \emph{port-Hamiltonian system}, if the matrices $J,\, R,\, Q \in 
\R^{n \times n}$, ${G, \,F \in \R^{n \times m}}$, and ${S,\, N \in \R^{m \times 
m}}$ satisfy the following constraints:
  \begin{enumerate}[(i)]
    \item The matrices $J$ and $N$ are skew-symmetric.
    \item The \emph{passivity matrix}
      $W := \begin{bsmallmatrix}R & F \\ F^\T & S\end{bsmallmatrix}$
      is symmetric positive semi-definite.
    \item $Q$ is symmetric positive definite.
  \end{enumerate}
The \emph{Hamiltonian (energy-storage) function} $\mathcal{H}: \R^\nx  \rightarrow \R$ is then given by
\begin{align}
  \label{eq:Hamiltonian}
  \mathcal{H}(x) = \frac{1}{2}x^\T Q x.
\end{align}
\end{definition}

In this way, the plants that we consider can be formulated as
\begin{align}\label{eq:phplant}
  \begin{bmatrix}
    \dot x(t) \\ z(t) \\ y(t)
  \end{bmatrix}
  =
  \begin{bmatrix}
    (J-R)Q & B_1 & G-F \\
    C_1 & D_{11} & D_{12} \\
    {(G+F)}^\top Q & D_{21} & S-N
  \end{bmatrix}
  \begin{bmatrix}
    x(t) \\ w(t) \\ u(t)
  \end{bmatrix},
\end{align}
with matrices $J, \,R,\, Q,\, G,\, F,\, S$, and $N$ that satisfy the 
constraints imposed in \Cref{def:pH}.

Port-Hamiltonian systems are also \emph{passive}. This follows from the fact 
that the Hamiltonian~\eqref{eq:Hamiltonian} of a pH system~\eqref{eq:phsys} is 
a so-called \emph{storage function} as it fulfills the \emph{dissipation 
inequality}
\begin{equation*}
 \cH(x(t_1)) \le \cH(x(t_0)) + \int_{t_0}^{t_1} y(t)^\top u(t)\,\mathrm{d} t
\end{equation*}
for each solution trajectory $(u,x,y)$ of~\eqref{eq:phsys} and for each $t_0,\,t_1 \ge 0$ with $t_0 \le t_1$. The converse is also true under the assumption of minimality\footnote{A system is called minimal if it is controllable and observable; see~\citet{ZhoDG96}}, i.\,e., each minimal passive system can be realized as a pH system by suitable state-space transformations~\citep[Cor.~2]{BeaMX22}. For a detailed analysis of such realizations, see also~\citet{Cherifi2022}. In the next section, we use this fact to define an intuitive approach to passivity-based $\hinf$ control.

It is a well-known fact that the negative feedback interconnection of two passive systems 
is again passive (see~\citet[Theorem~6.1]{Khalil2002}) and thus \emph{Lyapunov stable}, i.\,e., all 
eigenvalues of the corresponding closed-loop system matrix have nonpositive real 
part and the ones on the imaginary axis are semi-simple. In our setting, a \emph{negative feedback 
interconnection} is a coupling of the plant and controller obtained from the modified coupling conditions $u_{\bK}(\cdot) = 
y(\cdot)$ and $u(\cdot) = -y_{\bK}(\cdot)$. We use this coupling in the remainder of  
this article.

We restate the passive interconnection result for pH systems and provide a proof  
because our setup differs slightly from the literature.

\begin{proposition}\label{prop:lyapstab}
  Consider a pH plant model (without the channels $w$ and $z$)
  \begin{align*}
    \bP_{\mathsf{pH}}:
    \begin{bmatrix}
      \dot x(t) \\ y(t)
    \end{bmatrix}
    =
    \begin{bmatrix}
      (J-R)Q & G-F \\
      (G+F)^\top Q & S-N
    \end{bmatrix}
    \begin{bmatrix}
      x(t) \\ u(t)
    \end{bmatrix},
  \end{align*}
  that satisfies the conditions in \Cref{def:pH}, and a pH controller
  \begin{align*}
    \phK: \begin{bmatrix}
      \rdx(t) \\ \red{y}(t)
    \end{bmatrix}
    =
    \begin{bmatrix}
      (\rJ-\rR)\rQ & \rG-\rF \\
      (\rG+\rF)^\top \rQ & \rS-\rN
    \end{bmatrix}
    \begin{bmatrix}
      \rx(t) \\ \red{u}(t)
    \end{bmatrix},
  \end{align*}
  that also satisfies the conditions in \Cref{def:pH}. Then the negative 
feedback interconnection of $\bP_{\mathsf{pH}}$ with $\bK_{\mathsf{pH}}$ 
resulting from the coupling conditions $u_{\bK}(\cdot) = y(\cdot)$ and $u(\cdot) 
= -y_{\bK}(\cdot)$ results in a Lyapunov stable closed-loop system.
\end{proposition}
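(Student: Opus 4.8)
The plan is to use the total stored energy of plant and controller as a common Lyapunov function for the closed-loop system, show that it is non-increasing along closed-loop trajectories, and then read off the claimed spectral property from the associated Lyapunov matrix inequality. The non-increase will come from combining the dissipation inequality of each subsystem with the exact power cancellation enforced by the negative feedback coupling.

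First I would make the interconnection explicit. Substituting the output equations of $\bP_{\mathsf{pH}}$ and $\bK_{\mathsf{pH}}$ into the coupling conditions $u_{\bK} = y$ and $u = -y_{\bK}$ and assuming the interconnection to be well-posed, one solves for $u$ and $u_{\bK}$ as linear functions of the aggregated state $\xi := \begin{bsmallmatrix} x \\ \rx \end{bsmallmatrix} \in \R^{n+\nK}$, which yields a linear closed-loop system $\dot\xi = \mathcal{A}_{\mathrm{cl}}\,\xi$. As a Lyapunov function candidate I take $\mathcal{H}_{\mathrm{cl}}(\xi) := \mathcal{H}(x) + \mathcal{H}_{\bK}(\rx)$, where $\mathcal{H}(x) = \tfrac12 x^\top Q x$ is the Hamiltonian of the plant and $\mathcal{H}_{\bK}(\rx) := \tfrac12 \rx^\top \rQ\, \rx$ that of the controller; equivalently $\mathcal{H}_{\mathrm{cl}}(\xi) = \tfrac12 \xi^\top \mathcal{Q}_{\mathrm{cl}}\,\xi$ with $\mathcal{Q}_{\mathrm{cl}} := \begin{bsmallmatrix} Q & 0 \\ 0 & \rQ \end{bsmallmatrix}$, which is symmetric positive definite because $Q$ and $\rQ$ are, by item~(iii) of \Cref{def:pH}.

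The core step is the power balance along closed-loop trajectories. Differentiating the plant Hamiltonian and using the skew-symmetry of $J$ and $N$ to cancel the indefinite terms gives $\dot{\mathcal{H}}(x) = y^\top u - \begin{bsmallmatrix} Qx \\ u \end{bsmallmatrix}^\top W \begin{bsmallmatrix} Qx \\ u \end{bsmallmatrix} \le y^\top u$, since the passivity matrix $W$ is positive semi-definite by item~(ii) of \Cref{def:pH}; the same computation for the controller yields $\dot{\mathcal{H}}_{\bK}(\rx) \le y_{\bK}^\top u_{\bK}$. Adding these estimates and inserting $u_{\bK} = y$ and $u = -y_{\bK}$, the supply rates cancel exactly, $y^\top u + y_{\bK}^\top u_{\bK} = -y^\top y_{\bK} + y_{\bK}^\top y = 0$, so $\dot{\mathcal{H}}_{\mathrm{cl}} \le 0$ along every trajectory. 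Since every $\xi \in \R^{n+\nK}$ is the initial state of some closed-loop trajectory and $\dot{\mathcal{H}}_{\mathrm{cl}}$ evaluated there equals $\xi^\top \mathcal{Q}_{\mathrm{cl}}\mathcal{A}_{\mathrm{cl}}\,\xi = \tfrac12\xi^\top\big(\mathcal{A}_{\mathrm{cl}}^\top \mathcal{Q}_{\mathrm{cl}} + \mathcal{Q}_{\mathrm{cl}}\mathcal{A}_{\mathrm{cl}}\big)\xi$, this non-increase is equivalent to the Lyapunov inequality $\mathcal{A}_{\mathrm{cl}}^\top \mathcal{Q}_{\mathrm{cl}} + \mathcal{Q}_{\mathrm{cl}}\mathcal{A}_{\mathrm{cl}} \preceq 0$ with $\mathcal{Q}_{\mathrm{cl}} \succ 0$. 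The standard Lyapunov stability argument then shows that all eigenvalues of $\mathcal{A}_{\mathrm{cl}}$ have nonpositive real part and that those on the imaginary axis are semi-simple, i.e., the closed-loop system is Lyapunov stable.

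I expect the only genuinely delicate point to be well-posedness of the interconnection, which is needed for $\mathcal{A}_{\mathrm{cl}}$ to be defined at all and which requires carrying the feedthrough terms $S-N$ and $\rS-\rN$ through the algebraic loop; the supply-rate cancellation $y^\top u + y_{\bK}^\top u_{\bK} = 0$ itself is unaffected by how the loop is resolved, since the coupling equations hold identically, so the energy argument is robust to these details. An alternative would be to verify directly that $(\mathcal{A}_{\mathrm{cl}}, \mathcal{Q}_{\mathrm{cl}})$ constitutes a port-Hamiltonian realization via a Schur-complement computation on the interconnection matrix, but the energy-based argument above is shorter and essentially self-contained.
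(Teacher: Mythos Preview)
Your argument is correct: the computation $\dot{\mathcal H}(x) = y^\top u - \begin{bsmallmatrix} Qx \\ u\end{bsmallmatrix}^\top W \begin{bsmallmatrix} Qx \\ u\end{bsmallmatrix}$ is exactly the differential dissipation inequality of a pH system, the supply-rate cancellation under the negative feedback is clean, and the Lyapunov inequality $\mathcal A_{\mathrm{cl}}^\top \mathcal Q_{\mathrm{cl}} + \mathcal Q_{\mathrm{cl}}\mathcal A_{\mathrm{cl}}\preceq 0$ with $\mathcal Q_{\mathrm{cl}}\succ 0$ does yield the stated spectral conclusion by the standard Jordan-chain argument. Your caveat on well-posedness is also appropriate; the paper tacitly assumes invertibility of the feedthrough coupling matrix as well.

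However, the paper takes a genuinely different route. Instead of a Lyapunov/energy argument, it writes the closed-loop matrix explicitly, recognises it as a Schur complement, and \emph{reverts} that Schur complement to obtain a regular index-one matrix pencil of size $n+\nK+2m$ that it then identifies as a dissipative Hamiltonian pencil $s\mathcal E-(\mathcal J-\mathcal R)\mathcal Q$ with $\mathcal E^\top\mathcal Q\succeq 0$, $\mathcal J$ skew, $\mathcal R\succeq 0$; the conclusion is then obtained by citing a spectral result for such pencils from the literature. Your approach is more elementary and fully self-contained, and it is essentially the classical ``interconnection of passive systems is passive'' proof specialised to linear pH realisations. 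The paper's pencil-based approach, by contrast, makes the pH structure of the closed loop explicit at the algebraic level, absorbs the algebraic feedthrough loop into the pencil rather than eliminating it, and plugs directly into the descriptor/DAE framework used elsewhere in the pH literature. Interestingly, the alternative you sketch in your last sentence (a Schur-complement verification that the closed loop is itself pH) is close in spirit to what the paper actually does.
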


\begin{proof}
  The closed-loop matrix of the negative feedback interconnection of 
$\bP_\mathsf{pH}$ and $\phK$ is given by
\begin{multline*} 
\small
 \begin{bmatrix} (J-R)Q & 0 \\ 0 & (J_{\bK} - R_{\bK})Q_{\bK} 
\end{bmatrix} + 
\begin{bmatrix} G-F & 0 
\\ 
0 & -G_{\bK} + F_{\bK} \end{bmatrix} \\ \small \cdot \begin{bmatrix} I_{m} 
& 
S_{\bK}-N_{\bK} \\ -S+N & 
I_{m} \end{bmatrix}^{-1} \begin{bmatrix} 0 & (G_{\bK}+F_{\bK})^\top 
Q_{\bK} 
\\ (G+F)^\top Q & 0 
\end{bmatrix}.
\end{multline*}
Note the two additional minus signs compared to~\eqref{eq:clmatrix} that are 
due to the 
negative interconnection of $\bP_\mathsf{pH}$ and $\phK$. This matrix has the 
structure of a Schur complement. By reverting the Schur complement and some 
additional permutations and scalings, it can be 
seen that this matrix has the same eigenvalues as the regular index-one matrix 
pencil (as defined in~\citet[Chapter~2]{KunM06})
\begin{align}\label{eq:phclpencil}
\footnotesize
  \begin{bmatrix}
    sI_n - (J-R)Q & 0 & -G+F & 0 \\
    0 & sI_{\nK} - (J_{\bK}-R_{\bK})Q_{\bK} & 0 & -G_{\bK} + F_{\bK} \\ 
    (G+F)^\top Q & 0 & S-N & I_{m} \\
    0 & (G_{\bK}+F_{\bK})^\top Q_{\bK} & -I_{m} & S_{\bK}-N_{\bK} 
\end{bmatrix}
\end{align}
except for $2m$ additional eigenvalues at infinity. This pencil can be written 
as $s\cE - (\cJ-\cR)\cQ$, where $\cE^\top \cQ \succeq 0$ with $\cQ = 
\operatorname{diag}(Q,Q_{\bK},I_m,I_{m}) \succ 0$, $\cJ = -\cJ^\top$, and $\cR \succeq
0$.\footnote{The notation $X \succ 0$ ($X \succeq 0$) denotes the positive 
(semi-)\\definiteness of a real symmetric matrix $X$.} Hence, it is a 
dissipative Hamiltonian pencil. Since the pencil $s\cE -\cQ$ 
is regular as well, the assertion follows from~\citet[Thm.~2]{MehMW21}.
\end{proof}

\begin{remark} \Cref{prop:lyapstab} only shows that the negative feedback 
interconnection of the plant and the controller results in a Lyapunov stable 
closed-loop system. However, the closed-loop system matrix may still have 
undesired eigenvalues on the imaginary axis. For the unstructured problem, 
there are conditions that guarantee the existence of an \emph{asymptotically} 
stabilizing full-order controller. However, for fixed-order control, these 
conditions fail to be valid, see, e.\,g., \cite{GerSS98}. Throughout our paper we assume that 
that an asymptotically stabilizing pH controller exists. An obvious sufficient 
condition that is also often satisfied in practice is the asymptotic stability 
of $(J-R)Q$; this has also been assumed in the recent work \cite{Breiten2022}. 
\end{remark}


\section{Passivity Enforcement}\label{sec:pasbased}

The close connection between passive and pH controllers (see \Cref{thm:equivPH} 
below) permits the computation of a pH controller by means of an 
alternative indirect approach: first a general controller is computed using 
either of the established fixed-order $\hinf$ synthesis methods \texttt{HIFOO} 
or \texttt{hinfstruct} and then, in a post-processing step, a passivity-check 
determines, whether the controller is already passive and can be directly 
transformed to pH form or if another passivity enforcement step must first be 
applied. This post-processing approach is typically applied successfully in 
passivity-based system identification; see e.\,g.~\cite{Gus2001, Grivet2004, 
Oliveira2014}.

If a controller $\bK$ is passive, then its transfer function $K$ is \emph{positive real} as in the following definition.
\begin{definition}\label{def:posreal}
  A proper real-rational transfer function $K$ is called \emph{positive 
real}, if
(i) all poles of $K$ have non-positive real part, 
(ii) the matrix-valued \emph{Popov function}
$
        \Phi(s) := K{(-s)}^{\T} + K(s)
        $
      attains positive semi-definite values for all $s \in \ri \R$, which are 
not poles of $K$, and
(iii) for any purely imaginary pole $\ri \omega$ of $K$ we have that the 
residue matrix $\lim \limits_{s  \rightarrow \ri \omega}(s-\ri \omega)K(s)$ is 
positive semi-definite.
\end{definition}

The equivalences between passivity, positive-real transfer functions, and a 
possible pH formulation are summarized in the following theorem. A proof can 
be found in~\cite{BeaMXZ18}.

\begin{thm}\label{thm:equivPH}
  Assume that a controller $\bK$ of order $k$ is minimal and Lyapunov stable. 
Then the following statements are equivalent:
  \begin{enumerate}[(i)]
    \item $\bK$ is passive.
    \item The controller transfer function $K$ is positive real as defined in 
\Cref{def:posreal}.
    \item $\bK$ can be written as pH controller.
    \item There exists a symmetric positive definite matrix $X \in \R^{\nK 
\times  \nK}$ satisfying the \emph{Kalman-Yakubovich-Popov (KYP) inequality}
      \begin{align*}
        \begin{bmatrix}
          -\rA^\T X- X \rA &\rC^\T -X \rB \\
          \rC - \rB^\T X & \rD + \rD^\T
        \end{bmatrix}
        \succeq 0.
      \end{align*}
  \end{enumerate}
\end{thm}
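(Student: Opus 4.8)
The plan is to prove the four statements equivalent by closing the cycle (iii) $\Rightarrow$ (i) $\Rightarrow$ (ii) $\Rightarrow$ (iv) $\Rightarrow$ (iii), using the KYP inequality (iv) as the analytic hub. The implication (iii) $\Rightarrow$ (i) is immediate from the dissipation inequality recalled in \Cref{sec:pHsystems}: the Hamiltonian of a pH realization is a storage function for the supply rate $u^\T y$, hence such a controller is passive. For (i) $\Rightarrow$ (ii) I would pass to the frequency domain. Driving $\bK$ from rest with a windowed harmonic input $u(t)=\operatorname{Re}\big(u_0\,\mathrm{e}^{\ri\omega t}\big)$, passivity gives $\operatorname{Re}\int_0^\infty \overline{u(t)}^\T y(t)\,\mathrm{d}t\ge 0$; letting the window grow and using Parseval together with the absence of poles of $K$ in the open right half-plane (immediate from Lyapunov stability) yields $u_0^*\,\Phi(\ri\omega)\,u_0\ge 0$ at every $\omega$ that is not a pole, which is condition~(ii) of \Cref{def:posreal}. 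The pole-location requirement~(i) of \Cref{def:posreal} is again Lyapunov stability, and the residue condition~(iii) of \Cref{def:posreal} follows by isolating the undamped (semi-simple, by Lyapunov stability) mode attached to a purely imaginary pole and bounding the energy it can inject; minimality enters here to ensure that this mode is reachable and observable, so that the residue matrix is exactly what appears in the energy balance.

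The core step is (ii) $\Rightarrow$ (iv), i.e., the positive real lemma for minimal, possibly only Lyapunov-stable realizations. I would argue via spectral factorization of the Popov function $\Phi(s)=K(-s)^\T+K(s)$: positive realness together with minimality guarantees that the set of symmetric solutions $X$ of the KYP inequality is nonempty and that every such solution is nonsingular, so that — e.g., taking the minimal solution — one obtains $X=X^\T\succ 0$. I expect this to be the main obstacle, and within it the delicate point is the treatment of poles of $K$ on the imaginary axis, where the purely Riccati-based argument degenerates and one has to invoke the refined version of the lemma that accounts for the residue condition~(iii) of \Cref{def:posreal}; it is exactly this point that makes Lyapunov (rather than asymptotic) stability the right hypothesis.

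Finally, for (iv) $\Rightarrow$ (iii): given $X=X^\T\succ 0$ solving the KYP inequality, factor $X=T^\T T$ with $T\in\R^{\nK\times\nK}$ invertible and apply the state-space transformation $\rx\mapsto T\rx$; this leaves the controller transfer function $K$ unchanged, replaces $(\rA,\rB,\rC,\rD)$ by $(T\rA T^{-1},\,T\rB,\,\rC T^{-1},\,\rD)$, and — by a congruence with $\operatorname{diag}(T^{-\T},I)$ — turns the KYP inequality into the same inequality with $X$ replaced by the identity. Renaming the transformed data again $(\rA,\rB,\rC,\rD)$, we may thus assume $X=I_{\nK}$. Now set $\rQ:=I_{\nK}$, $\rJ:=\tfrac12(\rA-\rA^\T)$, $\rR:=-\tfrac12(\rA+\rA^\T)$, $\rG:=\tfrac12(\rB+\rC^\T)$, $\rF:=\tfrac12(\rC^\T-\rB)$, $\rS:=\tfrac12(\rD+\rD^\T)$, and $\rN:=\tfrac12(\rD^\T-\rD)$. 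Then $\rA=(\rJ-\rR)\rQ$, $\rB=\rG-\rF$, $\rC=(\rG+\rF)^\T\rQ$, $\rD=\rS-\rN$, so the realization is of the form \eqref{eq:phsys}, with $\rJ,\rN$ skew-symmetric and $\rQ\succ 0$; moreover the KYP inequality in these coordinates amounts to $\begin{bsmallmatrix}\rR & \rF\\ \rF^\T & \rS\end{bsmallmatrix}\succeq 0$, which is the passivity-matrix condition~(ii) of \Cref{def:pH}. Hence $\bK$ can be written as a pH controller, closing the cycle. Reading off $X=\rQ$ from a given pH realization makes the reverse implication (iii) $\Rightarrow$ (iv) transparent as well, should a non-cyclic arrangement be preferred.
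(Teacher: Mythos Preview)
The paper does not prove this theorem itself; immediately before the statement it says that ``a proof can be found in~\cite{BeaMXZ18}'' and leaves it at that. So there is no in-paper argument to compare your proposal against.

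That said, your cycle is the standard one, and in particular your explicit construction for (iv) $\Rightarrow$ (iii) --- factor $X=T^\T T$, change coordinates so that the KYP solution becomes $I_{\nK}$, then read off $\rJ,\rR,\rG,\rF,\rS,\rN$ as the skew/symmetric parts of the transformed $(\rA,\rB,\rC,\rD)$ --- is exactly the construction in the cited reference and in~\cite{Cherifi2022}, which the paper also points to. Your identification of (ii) $\Rightarrow$ (iv) as the analytic crux, with imaginary-axis poles handled through the residue condition of \Cref{def:posreal}, is correct. The only place the sketch is loose is (i) $\Rightarrow$ (ii): the windowed-harmonic-plus-Parseval argument is delicate when $\bK$ is merely Lyapunov stable, since steady state need not be reached; it can be made rigorous, but a cleaner alternative is to go (i) $\Rightarrow$ (iv) directly via Willems' result that a minimal passive system admits a \emph{quadratic} storage function, which yields the KYP inequality immediately and gives (ii) as a corollary.
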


Checking $\bK$ for passivity is straight-forward. We can simply check if the KYP 
inequality can be satisfied using either an ARE or LMI solver. For 
passivity enforcement, there exist several strategies, such 
as~\cite{Grivet2004, Coelho2004, Gillis2018}. An extensive discussion on well-established 
passivity enforcement methods is presented in~\citet[Chapter 10]{Grivet2015}. In 
our numerical experiments, we use the LMI-based method for passivity enforcement 
presented in~\cite{Coelho2004} due to its moderate computational cost for small 
controller orders and its straight-forward implementation. It is based on 
computing a minimally perturbed controller output $\widetilde{C}_{\bK} := \rC + 
\Xi L_{\rm c}$, where $L_{\rm c}$ is the Cholesky factor of the 
\emph{controllability Gramian}\footnote{The controllability gramian of $\bK$ 
can be computed as solution $P_{\rm c}$ to the Lyapunov equation $\rA P_{\rm c} 
+ P_{\rm c} \rA^\T + \rB\rB^\T = 0$.} of $\bK$, such that the perturbed KYP 
inequality
\begin{align*}
  \mathcal{W}(X, \Xi) := 
  \begin{bmatrix}
    -\rA^\T X- X \rA &\widetilde{C}_{\bK}^\T -X \rB \\
    \widetilde{C}_{\bK} - \rB^\T X & \rD + \rD^\T
  \end{bmatrix}
  \succeq 0.
\end{align*}
admits a solution. The convex optimization problem that computes a minimally perturbed controller is given by
\begin{align}\label{eq:passivation}
\min {\|\Xi\|}_{\rm F} \quad \text{s. t.} \quad \mathcal{W}(X, \Xi) \succeq 0, 
\quad X \succ 0,
\end{align}
which is a standard LMI problem that we solve in our numerical experiments using the method presented in~\cite{OCPB16}.

The main problem with this passivity enforcement method is emphasized in our numerical experiments: The generic controllers computed using either 
\texttt{hinfstruct} or \texttt{HIFOO} often require a large perturbation to be 
made passive, which deteriorates the $\hinf$ performance.
Moreover, passivity enforcement methods focus on minimally changing the 
transfer function of the controller but naturally do not take the $\hinf$ 
performance of the resulting closed-loop transfer function into account. 
Instead, the method 
we present in the following section computes passive controllers that aim 
directly at minimizing the $\hinf$ norm of the closed-loop transfer function.

\section{Our Approach: Structured Optimization-Based Synthesis}\label{sec:ourmethod}

Our approach for pH $\hinf$ synthesis is an adaptation of the model 
order reduction method developed in~\cite{SchV22, SchV21}. Following a similar strategy, we only make use of samples of the closed-loop transfer function to 
avoid the computation of the $\hinf$ norm of $P \star K$ and impose the pH 
structure of the controller directly in our parameterization such that no 
constraints have to be enforced during optimization. The parameterization is 
described in \Cref{lem:param} and our optimization method is given in detail 
in \Cref{alg:sobmor}. We call our method SOBSYN (\textbf{S}tructured 
\textbf{O}ptimization-\textbf{B}ased \textbf{SYN}thesis).

\begin{lem}\label{lem:param}[\cite{SchV22}]
  Let $\theta \in \R^{n_\theta}$ be a parameter vector that is partitioned 
as
  ${\theta = {[ \theta_J^\T, \theta_W^\T, \theta_G^\T, \theta_Q^\T,\theta_N^\T 
]}^\T}$, with ${\theta_J \in \R^{\nK(\nK-1)/2}}$, ${\theta_W \in 
\R^{(\nK+p_2)(\nK+p_2+1)/2}}$, $\theta_Q \in \R^{\nK(\nK+1)/2}$, $\theta_G \in 
\R^{\nK p_2}$, and $\theta_N \in \R^{p_2(p_2-1)}$. We define the matrices
  \begin{subequations}\label{eq:param}
  \begin{align}
    \pJ(\theta) &=\vtsu{(\theta_J)}^\T-\vtsu(\theta_J),\label{eq:pJ}\\
    \pW(\theta) &=\vtu{(\theta_W)}^\T  \vtu(\theta_W),\\
    \pQ(\theta) &=\vtu{(\theta_Q)}^\T  \vtu(\theta_Q),\label{eq:pQ}\\
    \pG(\theta) &=\vtf_{k,m_2}(\theta_G), \\
    \pN(\theta) &=\vtsu{(\theta_N)}^\T - \vtsu(\theta_N),
  \end{align}
  \end{subequations}
  where the functions $\vtu: \R^{{\nK}({\nK}+1)/2}\rightarrow 
\R^{{\nK} \times {\nK}}$ [or $\vtu: \R^{({\nK+p_2})({\nK}+p_2+1)/2}\rightarrow 
\R^{({\nK}+p_2) \times ({\nK}+p_2)}$]
(resp. ${\vtsu: \R^{\nK(\nK-1)}  \rightarrow \R^{\nK \times \nK}}$) map vectors 
to upper (resp.\ strictly upper)  triangular matrices, while the function 
${\vtf: 
\R^{\nK \cdot m_2}  \rightarrow \R^{\nK \times m_2}}$ reshapes a vector of 
length $\nK m_2$ to a $\nK \times m_2$ matrix. Then, if $\theta \in \R^{n_\theta}$ is such that $\pQ(\theta)\succ 0$, a pH controller of order $k$ can be defined as
  \begin{align}
    \footnotesize
    \label{eq:phcontroller}
    \begin{bmatrix}
      \rdx(t) \\ \red{y}(t)
    \end{bmatrix}
    =
    \begin{bmatrix}
      \left( \rJ(\theta)-\rR(\theta) \right)\rQ(\theta) & 
\rG(\theta)-\rF(\theta) \\
      {(\rG(\theta)+\rF(\theta))}^\T \rQ(\theta) & \rS(\theta)-\rN(\theta)
    \end{bmatrix}
    \begin{bmatrix}
      \rx(t) \\ \red{u}(t)
    \end{bmatrix},
  \end{align}
  where $\pR(\theta), \pF(\theta)$, and $\pS(\theta)$ are extracted from $W(\theta)$ as
  \begin{align*}
      \pR(\theta) &:= \begin{bmatrix} I_{\nK} & 0 \end{bmatrix} \pW(\theta) \begin{bmatrix} I_{\nK} & 0 \end{bmatrix}^\T,\\
      \pF(\theta) &:= \begin{bmatrix} I_{\nK} & 0 \end{bmatrix} \pW(\theta) \begin{bmatrix} 0 & I_{m_2} \end{bmatrix}^\T,\\
      \pS(\theta) &:= \begin{bmatrix} 0 & I_{m_2} \end{bmatrix} \pW(\theta) \begin{bmatrix} 0 & I_{m_2} \end{bmatrix}^\T.
  \end{align*}
  Conversely, to each pH controller $\phK$ with $k$ states and $m_2$ inputs and 
outputs, a vector $\theta \in \R^{n_\theta}$ can be defined such that $\phK = 
\phK(\theta)$.
\end{lem}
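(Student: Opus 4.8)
The plan is to verify the two directions of the correspondence between parameter vectors $\theta$ and pH controllers separately. For the forward direction, I would take an arbitrary $\theta \in \R^{n_\theta}$ with $\pQ(\theta) \succ 0$ and show that the matrices $\pJ(\theta), \pR(\theta), \pQ(\theta), \pG(\theta), \pF(\theta), \pS(\theta), \pN(\theta)$ defined in \eqref{eq:param} satisfy conditions (i)--(iii) of \Cref{def:pH}. Condition (i) is immediate: $\pJ(\theta) = \vtsu(\theta_J)^\T - \vtsu(\theta_J)$ and $\pN(\theta) = \vtsu(\theta_N)^\T - \vtsu(\theta_N)$ are skew-symmetric by construction, being differences of a matrix and its transpose. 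Condition (iii), $\pQ(\theta) \succ 0$, is exactly the hypothesis we assume; one only needs to note that $\pQ(\theta) = \vtu(\theta_Q)^\T \vtu(\theta_Q)$ is automatically symmetric positive \emph{semi}-definite, and the strict positivity is the single genuine assumption. For condition (ii), the passivity matrix is
\begin{align*}
W(\theta) = \begin{bmatrix} \pR(\theta) & \pF(\theta) \\ \pF(\theta)^\T & \pS(\theta) \end{bmatrix} = \pW(\theta) = \vtu(\theta_W)^\T \vtu(\theta_W),
\end{align*}
since $\pR(\theta), \pF(\theta), \pS(\theta)$ are precisely the corresponding blocks of $\pW(\theta)$ by their definitions via the selection matrices $\begin{bsmallmatrix} I_{\nK} & 0\end{bsmallmatrix}$ and $\begin{bsmallmatrix} 0 & I_{m_2}\end{bsmallmatrix}$; any matrix of the form $M^\T M$ is symmetric positive semi-definite, so condition (ii) holds. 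Hence \eqref{eq:phcontroller} is a valid pH controller of order $k$.

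For the converse direction, I would start from an arbitrary pH controller $\phK$ with $k$ states and $m_2$ inputs/outputs, given by matrices $\rJ, \rR, \rQ, \rG, \rF, \rS, \rN$ satisfying \Cref{def:pH}, and construct a preimage $\theta$. Since $\rJ$ is skew-symmetric, its strictly upper triangular part determines it completely, so I set $\theta_J$ to be the vectorization of $\operatorname{triu}(\rJ, 1)$ via the inverse of $\vtsu$; similarly for $\theta_N$ from $\rN$. Since $\rQ \succ 0$, it has a (unique) Cholesky factorization $\rQ = U^\T U$ with $U$ upper triangular, and I set $\theta_Q = \vtu^{-1}(U)$. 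For $\theta_W$, I assemble the passivity matrix $W = \begin{bsmallmatrix} \rR & \rF \\ \rF^\T & \rS \end{bsmallmatrix} \succeq 0$, take its Cholesky-type factorization $W = V^\T V$ with $V$ upper triangular of size $(\nK + p_2)$, and set $\theta_W = \vtu^{-1}(V)$. Finally $\theta_G = \vtf_{k,m_2}^{-1}(\rG)$ reshapes $\rG$ back to a vector. Stacking these yields $\theta$, and by the forward construction one checks $\phK(\theta) = \phK$.

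The main obstacle — really the only subtlety — is the existence of the triangular factorizations in the converse direction when the relevant matrices are merely positive \emph{semi}-definite rather than definite: the passivity matrix $W$ can be singular. A full-rank Cholesky factorization $W = V^\T V$ with $V$ upper triangular is guaranteed only for positive definite $W$; for the semi-definite case one must invoke the existence of an upper triangular (possibly rank-deficient) factor, which follows either from a limiting/perturbation argument ($W + \varepsilon I$ is positive definite, factor it, pass to the limit while preserving upper-triangular structure) or from the symmetric LDL$^\T$-type decomposition with pivoting; I would cite or briefly sketch this. The factor $\rQ \succ 0$ poses no such issue since it is strictly positive definite by assumption. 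One should also remark that the preimage $\theta$ is not unique — signs of rows of the Cholesky factors, and orderings, are free — but uniqueness is not claimed; only existence is asserted. With the factorization lemma in hand, everything else is bookkeeping: matching dimensions of the parameter blocks (which the stated dimensions $\nK(\nK-1)/2$, $(\nK+p_2)(\nK+p_2+1)/2$, etc., are designed to make work out) and invoking the forward direction to confirm $\phK(\theta) = \phK$.
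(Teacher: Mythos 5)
Your proof is correct, and it is essentially the only reasonable argument: the paper itself does not prove this lemma but defers to the cited reference, and the intended reasoning there is exactly what you wrote — the forward direction is a direct verification of Definition~\ref{def:pH} (skew-symmetry of $M^\top-M$, positive semi-definiteness of $M^\top M$, and the hypothesis $\pQ(\theta)\succ 0$), and the converse is bookkeeping plus triangular factorizations. The one genuine point of care is the one you flagged, the existence of an upper triangular $V$ with $W=V^\top V$ for a merely positive semi-definite passivity matrix $W$; rather than a limiting or pivoted-LDL$^\top$ argument, the cleanest route is to write $W=B^\top B$ with $B=W^{1/2}$ and take a QR factorization $B=QV$, so that $W=V^\top Q^\top QV=V^\top V$ with $V$ upper triangular (your limiting argument also works, since the Cholesky factors of $W+\varepsilon I$ are bounded in Frobenius norm and a convergent subsequence yields an upper triangular limit). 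Your remark on non-uniqueness of $\theta$ is apt but, as you say, irrelevant to the claim.
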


This parameterization is well-suited for low-order pH systems. For higher orders, the number of parameters can be excessive. A generalization of this parameterization to pH systems with algebraic constraints is developed in~\cite{MosSMV2022b}.

We can now define the optimization problem. If $K(\theta)(\cdot)$ denotes the transfer function of 
$\phK(\theta)$, our objective function is defined as
\begin{multline}
    \loss\left(\gamma,P,K(\theta),\mathcal{S}\right) := \\
    {\frac{1}{\gamma}\sum\limits_{s_i\in 
\mathcal{S}}\sum\limits_{j=1}^{\min(m_1, p_1)}{\left({\left[\sigma_j((P \star 
K(\theta))(s_i))-\gamma\right]}_+\right)}^2,}
    \label{eq:loss}
\end{multline}
where 
\begin{align*}
  [ \cdot ]_+:  \R \rightarrow \overline{\R^+}, \quad x \mapsto 
  \begin{cases}
    x & \text{if } x\ge 0,\\
    0 & \text{if } x<0,
  \end{cases}
\end{align*}
$\mathcal{S} \in \ri \R$ is the set of sample points at which the closed-loop 
transfer function is evaluated, and $\sigma_j$ denotes the $j$-th singular value 
of its matrix argument. In \Cref{alg:sobmor}, we minimize $\loss$ w.\,r.\,t. 
$\theta$ for decreasing values of the threshold value $\gamma$ to attain a controller 
$\phK(\theta)$ that leads to a small $\hinf$ norm of the closed-loop transfer 
function. In~\citet[Remark~3.1]{SchV22} we explain the benefits resulting from a 
minimization of~$\loss$ compared to a direct minimization of the $\hinf$ norm in 
the context of model order reduction. The arguments carry over directly to 
$\hinf$ synthesis.
\begin{enumerate}
  \item The repeated computation of the $\hinf$ norm of the (potentially 
large-scale) closed-loop transfer function is computationally demanding despite 
novel approaches for the $\hinf$ norm computation of large-scale transfer 
functions~\citep{Aliyev2017, GugGO13, MitO16}. In contrast to that, an 
evaluation of $\loss$ can be carried out even for large-scale plants and the 
plant transfer function evaluations can be cached for subsequent evaluations of 
$\loss$.
  \item The $\hinf$ norm depends only continuously on the controller parameters 
    but is not differentiable in general, which requires the use of nonsmooth optimization solvers. This poses additional theoretical and computational challenges.
In contrast, $\loss$ is differentiable with respect to the controller 
parameters; see~\citet[Proposition~3.1]{SchV22}.
  \item The gradient of the $\hinf$ norm (if it exists) contains only limited 
information for an overall good descent direction because it only considers the 
closed-loop transfer function at a single point. In contrast, $\loss$ 
takes into account $P \star K(\theta)$ at all points $s_i$, where a 
singular value of $P \star K(\theta)$ is larger than $\gamma$.
\end{enumerate}

Since $\loss$ only takes into account the closed-loop transfer function at 
sample points $s_i$, it is important to choose a suitable sample point set 
$\mathcal{S}$. If the sample points are distributed such that some local maxima 
of 
${\|(P \star K(\theta))(\ri\cdot)\|}_2$ are missed entirely, then a 
minimization of $\loss$ does not lead to a good $\hinf$ performance. However, an 
abundance of sample points leads to unnecessary computational effort. 
Therefore, in~\cite{SchV21}, an adaptive sampling strategy introduced 
in~\cite{Apkarian2018} is adapted for use in combination with $\loss$. The 
method is motivated and described in detail in~\citet[Algorithm~1]{SchV21}. It 
can be used directly in our proposed $\hinf$ synthesis algorithm, which is 
described in \Cref{alg:sobmor}.

\begin{algorithm}[tb]
  \LinesNumbered
  \SetAlgoLined
  \DontPrintSemicolon
  \SetKwInOut{Input}{Input}\SetKwInOut{Output}{Output}
  \Input{Plant transfer function $P$,
  initial controller transfer function $K(\theta_0)$ with parameter 
$\theta_0 \in \R^{n_\theta}$,
  initial sample point set $\mathcal{S} \subset \mathrm{i}\R$,
  upper bound $\gamma_{\rm u} > 0$,
  bisection tolerance $\varepsilon_1 > 0$,
  termination tolerance $\varepsilon_2 > 0$
}
  \Output{$\hinf$ controller of order $k$}
  Set $j:=0$ and $\gamma_{\rm l}:=0$.\;
  \While{$(\gamma_{\rm u}-\gamma_{\rm l})/(\gamma_{\rm u}+\gamma_{\rm l}) > \varepsilon_1$}{
    Set $\gamma:=(\gamma_{\rm u}+\gamma_{\rm l})/2$.\;
    Update the sample point set $\mathcal{S}$ using~\citet[Alg.~3.1]{SchV21}. \;
    Solve the minimization problem $\alpha := \min_{\theta\in
\R^{n_\theta}}\loss(\gamma,P,K(\theta),\mathcal{S})$ with minimizer 
$\theta_{j+1} \in \R^{n_\theta}$, initialized at $\theta_j$. \;
  \eIf{$\alpha > \varepsilon_2$}{
    Set $\gamma_{\rm l}:=\gamma$.\;
    }{
    Set $\gamma_{\rm u}:=\gamma$.\;
  }
  Set $j:=j+1$.
  }
  Construct the controller with $\theta_j$ as in Lemma~\ref{lem:param}.
  \caption{SOBSYN}\label{alg:sobmor}
\end{algorithm}

SOBSYN is based on a bisection over $\gamma$. After an update of the sample 
set $\mathcal{S}$ in each iteration, we compute a minimizer for $\loss$ at the 
current $\gamma$-level using the BFGS-based optimization solver implemented 
in~\cite{mogensen2018optim}. If the minimum $\alpha$ is lower than 
$\varepsilon_2$, the solver has managed to reduce all singular values of the closed-loop transfer function evaluated at all sample points
below the current $\gamma$-level up to the \emph{termination 
tolerance} $\varepsilon_2$. In this case, we reduce the upper bound 
to~$\gamma_{\rm 
u} := \gamma$; otherwise, we increase the lower bound to~$\gamma_{\rm l} := 
\gamma$. We terminate the bisection, when the relative difference between 
$\gamma_{\rm u}$ and $\gamma_{\rm l}$ is lower than the \emph{bisection 
tolerance} $\varepsilon_1$.

\section{Numerical Experiments}\label{sec:numerics}

In our first experiment, we demonstrate the effectiveness of our method on 
the passive systems within the \COMPlib benchmark 
collection~\citep{Leibfritz2006},  
which is often used to benchmark $\hinf$ controller synthesis algorithms; 
see, e.\,g.~\cite{Apkarian2006Hinfstruct, Burke2006Hifoo}. We only use plants, 
which can be rewritten as in~\eqref{eq:phplant}, as our algorithm only applies 
to these cases. We compare our method to the general purpose fixed-order 
$\hinf$ synthesis methods \texttt{HIFOO} and \texttt{hinfstruct}.

The structure of this experiment is as follows. For each benchmark plant, we 
apply \texttt{HIFOO}, \texttt{hinfstruct}, and our method to generate a 
controller for controller orders $k = 1,\, \dots,\, 5$. After a comparison of 
the $\hinf$ performance of all controllers, we solve~\eqref{eq:passivation} to 
compute passive approximations to the controllers computed with~\texttt{HIFOO} 
and~\texttt{hinfstruct} and re-evaluate the $\hinf$ performance for the passive 
controllers. We only report a subset of the conducted experiments that is 
representative for our overall observations.\footnote{The remaining 
experimental results are available at \url{zenodo.org/record/7049952}.}

In Figure~\ref{fig:ebresults}, we plot the $\hinf$ performance of the controllers computed with our method and with~\texttt{HIFOO} and~\texttt{hinfstruct} for different models and controller orders. There are some differences in the $\hinf$ performances accross the models and controller orders but no method is systematically leading to worse results. This is despite the fact that SOBSYN is restricted to compute pH controllers only. Just for the \texttt{DLR1} model, our method leads to slightly worse controllers than the other two methods for the orders $k = 2, \dots, 5$.

\begin{figure}[tb]
  \centering
    \begin{tabular}{c}

\begin{tikzpicture}[/tikz/background rectangle/.style={fill={rgb,1:red,1.0;green,1.0;blue,1.0}, draw opacity={1.0}}, show background rectangle]
\begin{axis}[point meta max={nan}, point meta min={nan}, legend cell 
align={left}, legend columns={3}, title={}, title style={at={{(0.5,1)}}, 
anchor={south}, font={{\fontsize{14 pt}{18.2 pt}\selectfont}}, 
color={rgb,1:red,0.0;green,0.0;blue,0.0}, draw opacity={1.0}, rotate={0.0}}, 
legend style={color={rgb,1:red,0.0;green,0.0;blue,0.0}, draw opacity={1.0}, line 
width={1}, solid, fill={rgb,1:red,1.0;green,1.0;blue,1.0}, fill opacity={1.0}, 
text opacity={1.0}, font={{\fontsize{8 pt}{10.4 pt}\selectfont}}, 
text={rgb,1:red,0.0;green,0.0;blue,0.0}, cells={anchor={center}}, at={(0.5, 
1.02)}, anchor={south}}, axis 
background/.style={fill={rgb,1:red,1.0;green,1.0;blue,1.0}, opacity={1.0}}, 
anchor={north west}, xshift={1.0mm}, yshift={-1.0mm}, width={0.36\textwidth}, 
height={\hinfheight}, scaled x ticks={false}, xlabel={controller order $k$}, x 
tick style={color={rgb,1:red,0.0;green,0.0;blue,0.0}, opacity={1.0}}, x tick 
label style={color={rgb,1:red,0.0;green,0.0;blue,0.0}, opacity={1.0}, 
rotate={0}}, xlabel style={}, xmajorgrids={true}, xmin={0.88}, xmax={5.115}, 
xtick={{1.0,2.0,3.0,4.0,5.0}}, xticklabels={{$1$,$2$,$3$,$4$,$5$}}, xtick 
align={inside}, xticklabel style={font={{\fontsize{8 pt}{10.4 pt}\selectfont}}, 
color={rgb,1:red,0.0;green,0.0;blue,0.0}, draw opacity={1.0}, rotate={0.0}}, x 
grid style={color={rgb,1:red,0.0;green,0.0;blue,0.0}, draw opacity={0.1}, line 
width={0.5}, solid}, axis x line*={left}, x axis line 
style={color={rgb,1:red,0.0;green,0.0;blue,0.0}, draw opacity={1.0}, line 
width={1}, solid}, scaled y ticks={false}, ylabel={${\|P \star 
K\|}_{\mathcal{H}_\infty}$}, y tick 
style={color={rgb,1:red,0.0;green,0.0;blue,0.0}, opacity={1.0}}, y tick label 
style={color={rgb,1:red,0.0;green,0.0;blue,0.0}, opacity={1.0}, rotate={0}}, 
ylabel style={}, ymajorgrids={true}, ymin={3.103661684557664}, 
ymax={3.114261880835373}, ytick={{3.105,3.1075,3.11,3.1125000000000003}}, 
yticklabels={{$3.1050$,$3.1075$,$3.1100$,$3.1125$}}, ytick align={inside}, 
yticklabel style={font={{\fontsize{8 pt}{10.4 pt}\selectfont}}, 
color={rgb,1:red,0.0;green,0.0;blue,0.0}, draw opacity={1.0}, rotate={0.0}}, y 
grid style={color={rgb,1:red,0.0;green,0.0;blue,0.0}, draw opacity={0.1}, line 
width={0.5}, solid}, axis y line*={left}, y axis line 
style={color={rgb,1:red,0.0;green,0.0;blue,0.0}, draw opacity={1.0}, line 
width={1}, solid}, colorbar={false}]
    \addplot[color={rgb,1:red,0.302;green,0.6863;blue,0.2902}, name path={ce0ad4d5-2653-4b3b-a9ea-f25b44759a76}, draw opacity={1.0}, line width={1}, solid, mark={*}, mark size={3.0 pt}, mark repeat={1}, mark options={color={rgb,1:red,0.0;green,0.0;blue,0.0}, draw opacity={1.0}, fill={rgb,1:red,0.302;green,0.6863;blue,0.2902}, fill opacity={1.0}, line width={0.75}, rotate={0}, solid}]
        table[row sep={\\}]
        {
            \\
            1.0  3.105355661018425  \\
            2.0  3.1042005909468804  \\
            3.0  3.104230273766759  \\
            4.0  3.1042488401557486  \\
            5.0  3.104262205951297  \\
        }
        ;
    \addlegendentry {SOBSYN$ $}
    \addplot[color={rgb,1:red,0.2157;green,0.4941;blue,0.7216}, name path={f85f0d96-80ea-45d9-bb8f-9d0d5cee7390}, draw opacity={1.0}, line width={1}, solid, mark={diamond*}, mark size={3.0 pt}, mark repeat={1}, mark options={color={rgb,1:red,0.0;green,0.0;blue,0.0}, draw opacity={1.0}, fill={rgb,1:red,0.2157;green,0.4941;blue,0.7216}, fill opacity={1.0}, line width={0.75}, rotate={0}, solid}]
        table[row sep={\\}]
        {
            \\
            1.0  3.1139618752803435  \\
            2.0  3.104078730697971  \\
            3.0  3.104078729724103  \\
            4.0  3.1040787300441504  \\
            5.0  3.1040787300465342  \\
        }
        ;
    \addlegendentry {HIFOO$ $}
    \addplot[color={rgb,1:red,0.8941;green,0.102;blue,0.1098}, name path={7849d601-de71-492d-996b-0845d7cf4300}, draw opacity={1.0}, line width={1}, solid, mark={square*}, mark size={3.0 pt}, mark repeat={1}, mark options={color={rgb,1:red,0.0;green,0.0;blue,0.0}, draw opacity={1.0}, fill={rgb,1:red,0.8941;green,0.102;blue,0.1098}, fill opacity={1.0}, line width={0.75}, rotate={0}, solid}]
        table[row sep={\\}]
        {
            \\
            1.0  3.105259768487828  \\
            2.0  3.1039636487508053  \\
            3.0  3.1043645106213207  \\
            4.0  3.1039616901126936  \\
            5.0  3.1040764479189624  \\
        }
        ;
    \addlegendentry {Hinfstruct$ $}
\end{axis}
\end{tikzpicture} \\
    (a) \texttt{EB1} model \\

\begin{tikzpicture}[/tikz/background rectangle/.style={fill={rgb,1:red,1.0;green,1.0;blue,1.0}, draw opacity={1.0}}, show background rectangle]
\begin{axis}[point meta max={nan}, point meta min={nan}, legend cell 
align={left}, legend columns={3}, title={}, title style={at={{(0.5,1)}}, 
anchor={south}, font={{\fontsize{14 pt}{18.2 pt}\selectfont}}, 
color={rgb,1:red,0.0;green,0.0;blue,0.0}, draw opacity={1.0}, rotate={0.0}}, 
legend style={color={rgb,1:red,0.0;green,0.0;blue,0.0}, draw opacity={1.0}, line 
width={1}, solid, fill={rgb,1:red,1.0;green,1.0;blue,1.0}, fill opacity={1.0}, 
text opacity={1.0}, font={{\fontsize{8 pt}{10.4 pt}\selectfont}}, 
text={rgb,1:red,0.0;green,0.0;blue,0.0}, cells={anchor={center}}, at={(1.02, 
1)}, anchor={north west}}, axis 
background/.style={fill={rgb,1:red,1.0;green,1.0;blue,1.0}, opacity={1.0}}, 
anchor={north west}, xshift={1.0mm}, yshift={-1.0mm}, width={0.36\textwidth}, 
height={\hinfheight}, scaled x ticks={false}, xlabel={controller order $k$}, x 
tick style={color={rgb,1:red,0.0;green,0.0;blue,0.0}, opacity={1.0}}, x tick 
label style={color={rgb,1:red,0.0;green,0.0;blue,0.0}, opacity={1.0}, 
rotate={0}}, xlabel style={}, xmajorgrids={true}, xmin={0.88}, xmax={5.12}, 
xtick={{1.0,2.0,3.0,4.0,5.0}}, xticklabels={{$1$,$2$,$3$,$4$,$5$}}, xtick 
align={inside}, xticklabel style={font={{\fontsize{8 pt}{10.4 pt}\selectfont}}, 
color={rgb,1:red,0.0;green,0.0;blue,0.0}, draw opacity={1.0}, rotate={0.0}}, x 
grid style={color={rgb,1:red,0.0;green,0.0;blue,0.0}, draw opacity={0.1}, line 
width={0.5}, solid}, axis x line*={left}, x axis line 
style={color={rgb,1:red,0.0;green,0.0;blue,0.0}, draw opacity={1.0}, line 
width={1}, solid}, scaled y ticks={false}, ylabel={${\|P \star 
K\|}_{\mathcal{H}_\infty}$}, y tick 
style={color={rgb,1:red,0.0;green,0.0;blue,0.0}, opacity={1.0}}, y tick label 
style={color={rgb,1:red,0.0;green,0.0;blue,0.0}, opacity={1.0}, rotate={0}}, 
ylabel style={}, ymajorgrids={true}, ymin={1.8062114550288484}, 
ymax={1.9082132592972785}, 
ytick={{1.82,1.84,1.86,1.8800000000000001,1.9000000000000001}}, 
yticklabels={{$1.82$,$1.84$,$1.86$,$1.88$,$1.90$}}, ytick align={inside}, 
yticklabel style={font={{\fontsize{8 pt}{10.4 pt}\selectfont}}, 
color={rgb,1:red,0.0;green,0.0;blue,0.0}, draw opacity={1.0}, rotate={0.0}}, y 
grid style={color={rgb,1:red,0.0;green,0.0;blue,0.0}, draw opacity={0.1}, line 
width={0.5}, solid}, axis y line*={left}, y axis line 
style={color={rgb,1:red,0.0;green,0.0;blue,0.0}, draw opacity={1.0}, line 
width={1}, solid}, colorbar={false}]
    \addplot[color={rgb,1:red,0.302;green,0.6863;blue,0.2902}, name path={a36947d5-d9bd-4e14-b535-d042077801a1}, draw opacity={1.0}, line width={1}, solid, mark={*}, mark size={3.0 pt}, mark repeat={1}, mark options={color={rgb,1:red,0.0;green,0.0;blue,0.0}, draw opacity={1.0}, fill={rgb,1:red,0.302;green,0.6863;blue,0.2902}, fill opacity={1.0}, line width={0.75}, rotate={0}, solid}]
        table[row sep={\\}]
        {
            \\
            1.0  1.9053264157802474  \\
            2.0  1.8645983206452803  \\
            3.0  1.8292223552137954  \\
            4.0  1.8161573906278334  \\
            5.0  1.8129812728419485  \\
        }
        ;
    \addplot[color={rgb,1:red,0.2157;green,0.4941;blue,0.7216}, name path={4dd92abc-3724-45f1-a63a-d03a5ae0aa4f}, draw opacity={1.0}, line width={1}, solid, mark={diamond*}, mark size={3.0 pt}, mark repeat={1}, mark options={color={rgb,1:red,0.0;green,0.0;blue,0.0}, draw opacity={1.0}, fill={rgb,1:red,0.2157;green,0.4941;blue,0.7216}, fill opacity={1.0}, line width={0.75}, rotate={0}, solid}]
        table[row sep={\\}]
        {
            \\
            1.0  1.9052356587696306  \\
            2.0  1.8645500368107046  \\
            3.0  1.8184723752897263  \\
            4.0  1.8116628501877088  \\
            5.0  1.8090982985458794  \\
        }
        ;
    \addplot[color={rgb,1:red,0.8941;green,0.102;blue,0.1098}, name path={ed3c3ed0-c32d-4156-b111-2f1a9167eba4}, draw opacity={1.0}, line width={1}, solid, mark={square*}, mark size={3.0 pt}, mark repeat={1}, mark options={color={rgb,1:red,0.0;green,0.0;blue,0.0}, draw opacity={1.0}, fill={rgb,1:red,0.8941;green,0.102;blue,0.1098}, fill opacity={1.0}, line width={0.75}, rotate={0}, solid}]
        table[row sep={\\}]
        {
            \\
            1.0  1.905235732184757  \\
            2.0  1.9030652261583354  \\
            3.0  1.8180161030991921  \\
            4.0  1.8182168260010907  \\
            5.0  1.8184266274386391  \\
        }
        ;
\end{axis}
\end{tikzpicture} \\
    (b) \texttt{EB6} model \\

\begin{tikzpicture}[/tikz/background rectangle/.style={fill={rgb,1:red,1.0;green,1.0;blue,1.0}, draw opacity={1.0}}, show background rectangle]
\begin{axis}[point meta max={nan}, point meta min={nan}, legend cell 
align={left}, legend columns={3}, title={}, title style={at={{(0.5,1)}}, 
anchor={south}, font={{\fontsize{14 pt}{18.2 pt}\selectfont}}, 
color={rgb,1:red,0.0;green,0.0;blue,0.0}, draw opacity={1.0}, rotate={0.0}}, 
legend style={color={rgb,1:red,0.0;green,0.0;blue,0.0}, draw opacity={1.0}, line 
width={1}, solid, fill={rgb,1:red,1.0;green,1.0;blue,1.0}, fill opacity={1.0}, 
text opacity={1.0}, font={{\fontsize{8 pt}{10.4 pt}\selectfont}}, 
text={rgb,1:red,0.0;green,0.0;blue,0.0}, cells={anchor={center}}, at={(1.02, 
1)}, anchor={north west}}, axis 
background/.style={fill={rgb,1:red,1.0;green,1.0;blue,1.0}, opacity={1.0}}, 
anchor={north west}, xshift={1.0mm}, yshift={-1.0mm}, width={0.36\textwidth}, 
height={\hinfheight}, scaled x ticks={false}, xlabel={controller order $k$}, x 
tick style={color={rgb,1:red,0.0;green,0.0;blue,0.0}, opacity={1.0}}, x tick 
label style={color={rgb,1:red,0.0;green,0.0;blue,0.0}, opacity={1.0}, 
rotate={0}}, xlabel style={}, xmajorgrids={true}, xmin={0.88}, xmax={5.12}, 
xtick={{1.0,2.0,3.0,4.0,5.0}}, xticklabels={{$1$,$2$,$3$,$4$,$5$}}, xtick 
align={inside}, xticklabel style={font={{\fontsize{8 pt}{10.4 pt}\selectfont}}, 
color={rgb,1:red,0.0;green,0.0;blue,0.0}, draw opacity={1.0}, rotate={0.0}}, x 
grid style={color={rgb,1:red,0.0;green,0.0;blue,0.0}, draw opacity={0.1}, line 
width={0.5}, solid}, axis x line*={left}, x axis line 
style={color={rgb,1:red,0.0;green,0.0;blue,0.0}, draw opacity={1.0}, line 
width={1}, solid}, scaled y ticks={false}, ylabel={${\|P \star 
K\|}_{\mathcal{H}_\infty}$}, y tick 
style={color={rgb,1:red,0.0;green,0.0;blue,0.0}, opacity={1.0}}, y tick label 
style={color={rgb,1:red,0.0;green,0.0;blue,0.0}, opacity={1.0}, rotate={0}}, 
ylabel style={}, ymajorgrids={true}, ymin={-0.01041290431797455}, 
ymax={2.8568207621502397}, ytick={{0.0,0.5,1.0,1.5,2.0,2.5}}, 
yticklabels={{$0.0$,$0.5$,$1.0$,$1.5$,$2.0$,$2.5$}}, ytick align={inside}, 
yticklabel style={font={{\fontsize{8 pt}{10.4 pt}\selectfont}}, 
color={rgb,1:red,0.0;green,0.0;blue,0.0}, draw opacity={1.0}, rotate={0.0}}, y 
grid style={color={rgb,1:red,0.0;green,0.0;blue,0.0}, draw opacity={0.1}, line 
width={0.5}, solid}, axis y line*={left}, y axis line 
style={color={rgb,1:red,0.0;green,0.0;blue,0.0}, draw opacity={1.0}, line 
width={1}, solid}, colorbar={false}]
    \addplot[color={rgb,1:red,0.302;green,0.6863;blue,0.2902}, name path={16607f74-26de-47ce-ad4d-df38ed3945c0}, draw opacity={1.0}, line width={1}, solid, mark={*}, mark size={3.0 pt}, mark repeat={1}, mark options={color={rgb,1:red,0.0;green,0.0;blue,0.0}, draw opacity={1.0}, fill={rgb,1:red,0.302;green,0.6863;blue,0.2902}, fill opacity={1.0}, line width={0.75}, rotate={0}, solid}]
        table[row sep={\\}]
        {
            \\
            1.0  1.365766195555964  \\
            2.0  0.5096737304377291  \\
            3.0  0.46816816931818905  \\
            4.0  0.4396580971241398  \\
            5.0  0.38125921387967837  \\
        }
        ;
    \addplot[color={rgb,1:red,0.2157;green,0.4941;blue,0.7216}, name path={7d0e5c5e-8387-4eb7-b030-8aa7e61326a6}, draw opacity={1.0}, line width={1}, solid, mark={diamond*}, mark size={3.0 pt}, mark repeat={1}, mark options={color={rgb,1:red,0.0;green,0.0;blue,0.0}, draw opacity={1.0}, fill={rgb,1:red,0.2157;green,0.4941;blue,0.7216}, fill opacity={1.0}, line width={0.75}, rotate={0}, solid}]
        table[row sep={\\}]
        {
            \\
            1.0  0.6666420523805608  \\
            2.0  0.24297515899234523  \\
            3.0  0.20551917445888132  \\
            4.0  0.07412706202328107  \\
            5.0  0.0707352183179183  \\
        }
        ;
    \addplot[color={rgb,1:red,0.8941;green,0.102;blue,0.1098}, name path={0df2ee71-4ea3-401e-98ec-e0df1846b7ea}, draw opacity={1.0}, line width={1}, solid, mark={square*}, mark size={3.0 pt}, mark repeat={1}, mark options={color={rgb,1:red,0.0;green,0.0;blue,0.0}, draw opacity={1.0}, fill={rgb,1:red,0.8941;green,0.102;blue,0.1098}, fill opacity={1.0}, line width={0.75}, rotate={0}, solid}]
        table[row sep={\\}]
        {
            \\
            1.0  2.775672639514347  \\
            2.0  0.24297938875991948  \\
            3.0  0.220800249923581  \\
            4.0  0.07563577844871887  \\
            5.0  0.14270439766798199  \\
        }
        ;
\end{axis}
\end{tikzpicture} \\
    (c) \texttt{DLR1} model
    \end{tabular}
  \caption{$\hinf$ performance for general controllers}\label{fig:ebresults}
\end{figure}

In Figure~\ref{fig:ebresults_passive}, we compare the $\hinf$ performance of 
\mbox{SOBSYN} to the $\hinf$ performance of the \texttt{HIFOO} and 
\texttt{hinfstruct} controllers after the passivation step is applied. For 
\texttt{EB1}, the \texttt{HIFOO} and \texttt{hinfstruct} controllers are already mostly 
passive such that only for $r \in \{3, 5\}$, the $\hinf$ performance is 
slightly worse for \texttt{hinfstruct}\@. In \texttt{EB6}, we can observe the 
potentially damaging behavior of the passivation step to the $\hinf$ 
performance. Here the $\hinf$ norm of the closed-loop transfer function 
increases by more than two orders of magnitudes after the passivation. In 
Figure~\ref{fig:ebresults_passive} (c) we can observe, that for the \texttt{DLR1} model, 
most \texttt{HIFOO} and \texttt{hinfstruct} controllers are not passive such 
that the passivation step deteriorates their $\hinf$ performance significantly. 
In Figure~\ref{fig:popov}, we show the eigenvalues of the Popov function 
evaluated on the imaginary axis for one of the controllers computed with 
\texttt{HIFOO}. The figure reveals that some the eigenvalues are significantly below zero, which 
emphasizes that passivity enforcement often has to change the general purpose controllers drastically to make them passive.

\begin{figure}[tb]
  \centering
  \input{./PlotSources/popovplot.tikz}
  \caption{Eigenvalues of the Popov function of the order 5 controller for 
the \texttt{DLR1} model computed with \texttt{HIFOO}}\label{fig:popov}
\end{figure}

\begin{figure}[tb]
  \centering
    \begin{tabular}{c}

\begin{tikzpicture}[/tikz/background rectangle/.style={fill={rgb,1:red,1.0;green,1.0;blue,1.0}, draw opacity={1.0}}, show background rectangle]
\begin{axis}[point meta max={nan}, point meta min={nan}, legend cell 
align={left}, legend columns={3}, title={}, title style={at={{(0.5,1)}}, 
anchor={south}, font={{\fontsize{14 pt}{18.2 pt}\selectfont}}, 
color={rgb,1:red,0.0;green,0.0;blue,0.0}, draw opacity={1.0}, rotate={0.0}}, 
legend style={color={rgb,1:red,0.0;green,0.0;blue,0.0}, draw opacity={1.0}, line 
width={1}, solid, fill={rgb,1:red,1.0;green,1.0;blue,1.0}, fill opacity={1.0}, 
text opacity={1.0}, font={{\fontsize{8 pt}{10.4 pt}\selectfont}}, 
text={rgb,1:red,0.0;green,0.0;blue,0.0}, cells={anchor={center}}, at={(0.5, 
1.02)}, anchor={south}}, axis 
background/.style={fill={rgb,1:red,1.0;green,1.0;blue,1.0}, opacity={1.0}}, 
anchor={north west}, xshift={1.0mm}, yshift={-1.0mm}, width={0.36\textwidth}, 
height={\hinfheight}, scaled x ticks={false}, xlabel={controller order $k$}, x 
tick style={color={rgb,1:red,0.0;green,0.0;blue,0.0}, opacity={1.0}}, x tick 
label style={color={rgb,1:red,0.0;green,0.0;blue,0.0}, opacity={1.0}, 
rotate={0}}, xlabel style={}, xmajorgrids={true}, xmin={0.88}, xmax={5.12}, 
xtick={{1.0,2.0,3.0,4.0,5.0}}, xticklabels={{$1$,$2$,$3$,$4$,$5$}}, xtick 
align={inside}, xticklabel style={font={{\fontsize{8 pt}{10.4 pt}\selectfont}}, 
color={rgb,1:red,0.0;green,0.0;blue,0.0}, draw opacity={1.0}, rotate={0.0}}, x 
grid style={color={rgb,1:red,0.0;green,0.0;blue,0.0}, draw opacity={0.1}, line 
width={0.5}, solid}, axis x line*={left}, x axis line 
style={color={rgb,1:red,0.0;green,0.0;blue,0.0}, draw opacity={1.0}, line 
width={1}, solid}, scaled y ticks={false}, ylabel={${\|P \star 
K\|}_{\mathcal{H}_\infty}$}, y tick 
style={color={rgb,1:red,0.0;green,0.0;blue,0.0}, opacity={1.0}}, y tick label 
style={color={rgb,1:red,0.0;green,0.0;blue,0.0}, opacity={1.0}, rotate={0}}, 
ylabel style={}, ymajorgrids={true}, ymin={3.1038845240448563}, 
ymax={3.1143563121844506}, ytick={{3.105,3.1075,3.11,3.1125000000000003}}, 
yticklabels={{$3.1050$,$3.1075$,$3.1100$,$3.1125$}}, ytick align={inside}, 
yticklabel style={font={{\fontsize{8 pt}{10.4 pt}\selectfont}}, 
color={rgb,1:red,0.0;green,0.0;blue,0.0}, draw opacity={1.0}, rotate={0.0}}, y 
grid style={color={rgb,1:red,0.0;green,0.0;blue,0.0}, draw opacity={0.1}, line 
width={0.5}, solid}, axis y line*={left}, y axis line 
style={color={rgb,1:red,0.0;green,0.0;blue,0.0}, draw opacity={1.0}, line 
width={1}, solid}, colorbar={false}]
    \addplot[color={rgb,1:red,0.302;green,0.6863;blue,0.2902}, name path={ade11ef1-41e2-476a-a1ee-3305aaae987d}, draw opacity={1.0}, line width={1}, solid, mark={*}, mark size={3.0 pt}, mark repeat={1}, mark options={color={rgb,1:red,0.0;green,0.0;blue,0.0}, draw opacity={1.0}, fill={rgb,1:red,0.302;green,0.6863;blue,0.2902}, fill opacity={1.0}, line width={0.75}, rotate={0}, solid}]
        table[row sep={\\}]
        {
            \\
            1.0  3.105355661018425  \\
            2.0  3.1042005909468804  \\
            3.0  3.104230273766759  \\
            4.0  3.1042488401557486  \\
            5.0  3.104262205951297  \\
        }
        ;
    \addlegendentry {SOBSYN$ $}
    \addplot[color={rgb,1:red,0.2157;green,0.4941;blue,0.7216}, name path={06d28c30-533b-49a2-b9d0-d2f9c2c2f222}, draw opacity={1.0}, line width={1}, dotted, mark={diamond*}, mark size={3.0 pt}, mark repeat={1}, mark options={color={rgb,1:red,0.0;green,0.0;blue,0.0}, draw opacity={1.0}, fill={rgb,1:red,0.2157;green,0.4941;blue,0.7216}, fill opacity={1.0}, line width={0.75}, rotate={0}, solid}]
        table[row sep={\\}]
        {
            \\
            1.0  3.1140599408220093  \\
            2.0  3.1041808954072976  \\
            3.0  3.104180902876261  \\
            4.0  3.1041809026656133  \\
            5.0  3.1041809016459263  \\
        }
        ;
    \addlegendentry {HIFOO-PS$ $}
    \addplot[color={rgb,1:red,0.8941;green,0.102;blue,0.1098}, name path={713b7d8e-70a3-4f9d-817a-123eeeb7c11e}, draw opacity={1.0}, line width={1}, dotted, mark={square*}, mark size={3.0 pt}, mark repeat={1}, mark options={color={rgb,1:red,0.0;green,0.0;blue,0.0}, draw opacity={1.0}, fill={rgb,1:red,0.8941;green,0.102;blue,0.1098}, fill opacity={1.0}, line width={0.75}, rotate={0}, solid}]
        table[row sep={\\}]
        {
            \\
            1.0  3.1053556810314404  \\
            2.0  3.1041809045481625  \\
            3.0  3.1065900029113696  \\
            4.0  3.104180904552181  \\
            5.0  3.1048867188143356  \\
        }
        ;
    \addlegendentry {Hinfstruct-PS$ $}
\end{axis}
\end{tikzpicture} \\
    (a) \texttt{EB1} model \\

\begin{tikzpicture}[/tikz/background rectangle/.style={fill={rgb,1:red,1.0;green,1.0;blue,1.0}, draw opacity={1.0}}, show background rectangle]
\begin{axis}[point meta max={nan}, point meta min={nan}, legend cell 
align={left}, legend columns={3}, title={}, title style={at={{(0.5,1)}}, 
anchor={south}, font={{\fontsize{14 pt}{18.2 pt}\selectfont}}, 
color={rgb,1:red,0.0;green,0.0;blue,0.0}, draw opacity={1.0}, rotate={0.0}}, 
legend style={color={rgb,1:red,0.0;green,0.0;blue,0.0}, draw opacity={1.0}, line 
width={1}, solid, fill={rgb,1:red,1.0;green,1.0;blue,1.0}, fill opacity={1.0}, 
text opacity={1.0}, font={{\fontsize{8 pt}{10.4 pt}\selectfont}}, 
text={rgb,1:red,0.0;green,0.0;blue,0.0}, cells={anchor={center}}, at={(1.02, 
1)}, anchor={north west}}, axis 
background/.style={fill={rgb,1:red,1.0;green,1.0;blue,1.0}, opacity={1.0}}, 
anchor={north west}, xshift={1.0mm}, yshift={-1.0mm}, width={0.36\textwidth}, 
height={\hinfheight}, scaled x ticks={false}, xlabel={controller order $k$}, x 
tick style={color={rgb,1:red,0.0;green,0.0;blue,0.0}, opacity={1.0}}, x tick 
label style={color={rgb,1:red,0.0;green,0.0;blue,0.0}, opacity={1.0}, 
rotate={0}}, xlabel style={}, xmajorgrids={true}, xmin={0.88}, xmax={5.12}, 
xtick={{1.0,2.0,3.0,4.0,5.0}}, xticklabels={{$1$,$2$,$3$,$4$,$5$}}, xtick 
align={inside}, xticklabel style={font={{\fontsize{8 pt}{10.4 pt}\selectfont}}, 
color={rgb,1:red,0.0;green,0.0;blue,0.0}, draw opacity={1.0}, rotate={0.0}}, x 
grid style={color={rgb,1:red,0.0;green,0.0;blue,0.0}, draw opacity={0.1}, line 
width={0.5}, solid}, axis x line*={left}, x axis line 
style={color={rgb,1:red,0.0;green,0.0;blue,0.0}, draw opacity={1.0}, line 
width={1}, solid}, scaled y ticks={false}, ylabel={${\|P \star 
K\|}_{\mathcal{H}_\infty}$}, y tick 
style={color={rgb,1:red,0.0;green,0.0;blue,0.0}, opacity={1.0}}, y tick label 
style={color={rgb,1:red,0.0;green,0.0;blue,0.0}, opacity={1.0}, rotate={0}}, 
ylabel style={}, ymajorgrids={true}, ymin={-10.591270398865015}, 
ymax={427.69228866811443}, ytick={{0.0,100.0,200.0,300.0,400.0}}, 
yticklabels={{$0$,$100$,$200$,$300$,$400$}}, ytick align={inside}, yticklabel 
style={font={{\fontsize{8 pt}{10.4 pt}\selectfont}}, 
color={rgb,1:red,0.0;green,0.0;blue,0.0}, draw opacity={1.0}, rotate={0.0}}, y 
grid style={color={rgb,1:red,0.0;green,0.0;blue,0.0}, draw opacity={0.1}, line 
width={0.5}, solid}, axis y line*={left}, y axis line 
style={color={rgb,1:red,0.0;green,0.0;blue,0.0}, draw opacity={1.0}, line 
width={1}, solid}, colorbar={false}]
    \addplot[color={rgb,1:red,0.302;green,0.6863;blue,0.2902}, name path={14d24e21-5ab6-4731-bc0d-11ab72029c76}, draw opacity={1.0}, line width={1}, solid, mark={*}, mark size={3.0 pt}, mark repeat={1}, mark options={color={rgb,1:red,0.0;green,0.0;blue,0.0}, draw opacity={1.0}, fill={rgb,1:red,0.302;green,0.6863;blue,0.2902}, fill opacity={1.0}, line width={0.75}, rotate={0}, solid}]
        table[row sep={\\}]
        {
            \\
            1.0  1.9053264157802474  \\
            2.0  1.8645983206452803  \\
            3.0  1.8292223552137954  \\
            4.0  1.8161573906278334  \\
            5.0  1.8129812728419485  \\
        }
        ;
    \addplot[color={rgb,1:red,0.2157;green,0.4941;blue,0.7216}, name path={12791cf9-1435-45ad-b928-583a8a260437}, draw opacity={1.0}, line width={1}, dotted, mark={diamond*}, mark size={3.0 pt}, mark repeat={1}, mark options={color={rgb,1:red,0.0;green,0.0;blue,0.0}, draw opacity={1.0}, fill={rgb,1:red,0.2157;green,0.4941;blue,0.7216}, fill opacity={1.0}, line width={0.75}, rotate={0}, solid}]
        table[row sep={\\}]
        {
            \\
            1.0  1.9052358432122911  \\
            2.0  1.8645500273465538  \\
            3.0  1.905891280907725  \\
            4.0  415.28803699640747  \\
            5.0  2.858626917560559  \\
        }
        ;
    \addplot[color={rgb,1:red,0.8941;green,0.102;blue,0.1098}, name path={43f4f184-e1ef-4aba-9213-3d4cca2edc61}, draw opacity={1.0}, line width={1}, dotted, mark={square*}, mark size={3.0 pt}, mark repeat={1}, mark options={color={rgb,1:red,0.0;green,0.0;blue,0.0}, draw opacity={1.0}, fill={rgb,1:red,0.8941;green,0.102;blue,0.1098}, fill opacity={1.0}, line width={0.75}, rotate={0}, solid}]
        table[row sep={\\}]
        {
            \\
            1.0  1.905235727041193  \\
            2.0  1.903065229489284  \\
            3.0  2.5221285445063315  \\
            4.0  5.12381661912028  \\
            5.0  2.1840741590416357  \\
        }
        ;
\end{axis}
\end{tikzpicture} \\
    (b) \texttt{EB6} model \\

\begin{tikzpicture}[/tikz/background rectangle/.style={fill={rgb,1:red,1.0;green,1.0;blue,1.0}, draw opacity={1.0}}, show background rectangle]
\begin{axis}[point meta max={nan}, point meta min={nan}, legend cell 
align={left}, legend columns={3}, title={}, title style={at={{(0.5,1)}}, 
anchor={south}, font={{\fontsize{14 pt}{18.2 pt}\selectfont}}, 
color={rgb,1:red,0.0;green,0.0;blue,0.0}, draw opacity={1.0}, rotate={0.0}}, 
legend style={color={rgb,1:red,0.0;green,0.0;blue,0.0}, draw opacity={1.0}, line 
width={1}, solid, fill={rgb,1:red,1.0;green,1.0;blue,1.0}, fill opacity={1.0}, 
text opacity={1.0}, font={{\fontsize{8 pt}{10.4 pt}\selectfont}}, 
text={rgb,1:red,0.0;green,0.0;blue,0.0}, cells={anchor={center}}, at={(1.02, 
1)}, anchor={north west}}, axis 
background/.style={fill={rgb,1:red,1.0;green,1.0;blue,1.0}, opacity={1.0}}, 
anchor={north west}, xshift={1.0mm}, yshift={-1.0mm}, width={0.36\textwidth}, 
height={\hinfheight}, scaled x ticks={false}, xlabel={controller order $k$}, x 
tick style={color={rgb,1:red,0.0;green,0.0;blue,0.0}, opacity={1.0}}, x tick 
label style={color={rgb,1:red,0.0;green,0.0;blue,0.0}, opacity={1.0}, 
rotate={0}}, xlabel style={}, xmajorgrids={true}, xmin={0.88}, xmax={5.12}, 
xtick={{1.0,2.0,3.0,4.0,5.0}}, xticklabels={{$1$,$2$,$3$,$4$,$5$}}, xtick 
align={inside}, xticklabel style={font={{\fontsize{8 pt}{10.4 pt}\selectfont}}, 
color={rgb,1:red,0.0;green,0.0;blue,0.0}, draw opacity={1.0}, rotate={0.0}}, x 
grid style={color={rgb,1:red,0.0;green,0.0;blue,0.0}, draw opacity={0.1}, line 
width={0.5}, solid}, axis x line*={left}, x axis line 
style={color={rgb,1:red,0.0;green,0.0;blue,0.0}, draw opacity={1.0}, line 
width={1}, solid}, scaled y ticks={false}, ylabel={${\|P \star 
K\|}_{\mathcal{H}_\infty}$}, y tick 
style={color={rgb,1:red,0.0;green,0.0;blue,0.0}, opacity={1.0}}, y tick label 
style={color={rgb,1:red,0.0;green,0.0;blue,0.0}, opacity={1.0}, rotate={0}}, 
ylabel style={}, ymajorgrids={true}, ymin={-0.059609830315016016}, 
ymax={7.089580607567285}, ytick={{0.0,2.0,4.0,6.0}}, 
yticklabels={{$0$,$2$,$4$,$6$}}, ytick align={inside}, yticklabel 
style={font={{\fontsize{8 pt}{10.4 pt}\selectfont}}, 
color={rgb,1:red,0.0;green,0.0;blue,0.0}, draw opacity={1.0}, rotate={0.0}}, y 
grid style={color={rgb,1:red,0.0;green,0.0;blue,0.0}, draw opacity={0.1}, line 
width={0.5}, solid}, axis y line*={left}, y axis line 
style={color={rgb,1:red,0.0;green,0.0;blue,0.0}, draw opacity={1.0}, line 
width={1}, solid}, colorbar={false}]
    \addplot[color={rgb,1:red,0.302;green,0.6863;blue,0.2902}, name path={af24e97c-5fd2-4707-84f4-bf86af83a3af}, draw opacity={1.0}, line width={1}, solid, mark={*}, mark size={3.0 pt}, mark repeat={1}, mark options={color={rgb,1:red,0.0;green,0.0;blue,0.0}, draw opacity={1.0}, fill={rgb,1:red,0.302;green,0.6863;blue,0.2902}, fill opacity={1.0}, line width={0.75}, rotate={0}, solid}]
        table[row sep={\\}]
        {
            \\
            1.0  1.365766195555964  \\
            2.0  0.5096737304377291  \\
            3.0  0.46816816931818905  \\
            4.0  0.4396580971241398  \\
            5.0  0.38125921387967837  \\
        }
        ;
    \addplot[color={rgb,1:red,0.2157;green,0.4941;blue,0.7216}, name path={210ce619-4df3-49e6-90f0-e338e1ea82bf}, draw opacity={1.0}, line width={1}, dotted, mark={diamond*}, mark size={3.0 pt}, mark repeat={1}, mark options={color={rgb,1:red,0.0;green,0.0;blue,0.0}, draw opacity={1.0}, fill={rgb,1:red,0.2157;green,0.4941;blue,0.7216}, fill opacity={1.0}, line width={0.75}, rotate={0}, solid}]
        table[row sep={\\}]
        {
            \\
            1.0  0.6666420523804074  \\
            2.0  6.388618484662486  \\
            3.0  4.404174318833524  \\
            4.0  6.887245029136654  \\
            5.0  5.221258417771321  \\
        }
        ;
    \addplot[color={rgb,1:red,0.8941;green,0.102;blue,0.1098}, name path={9d9e1315-b767-4a27-a417-5d3b5d196060}, draw opacity={1.0}, line width={1}, dotted, mark={square*}, mark size={3.0 pt}, mark repeat={1}, mark options={color={rgb,1:red,0.0;green,0.0;blue,0.0}, draw opacity={1.0}, fill={rgb,1:red,0.8941;green,0.102;blue,0.1098}, fill opacity={1.0}, line width={0.75}, rotate={0}, solid}]
        table[row sep={\\}]
        {
            \\
            1.0  2.7756726425270033  \\
            2.0  5.02201430285878  \\
            3.0  6.742816103369417  \\
            4.0  4.055780585125054  \\
            5.0  0.14272574811561514  \\
        }
        ;
\end{axis}
\end{tikzpicture} \\
    (c) \texttt{DLR1} model
    \end{tabular}
  \caption{$\hinf$ performance for passive controllers}\label{fig:ebresults_passive}
\end{figure}


In our second experiment we use a scalable mass-spring-damper (MSD) system in port-Hamiltonian 
form to evaluate the impact of the plant order on the runtime of our method. The model is derived in~\cite{Gugercin2012} and the corresponding $\hinf$ 
control problem is described in the collection of pH benchmark 
systems.\footnote{available at \url{https://perma.cc/297G-NFUJ}} The 
state-space dimension of the model can be scaled by changing the 
number of masses that are included in the MSD chain. The runtimes and $\hinf$ 
performances for plant orders from 10 to 2000 and controller orders 1, 5, and 10 are reported in \Cref{tab:msdresults}. It can be observed, that 
while the runtimes increase significantly for \texttt{HIFOO} and 
\texttt{hinfstruct} as the plant dimension $n$ is increased (\texttt{hinfstruct} 
takes almost 50 hours to compute a controller for the plant with state dimension 
2000), the runtime of SOBSYN is less affected from an increase of $n$. However, 
we can observe that the controller dimension affects the runtime, which is 
well-aligned with our previous findings~\citep{SchV21, SchV22}. In general, the 
runtime of SOBSYN stays well under an hour for all plant and controller orders. 
The overall best $\hinf$ performance is provided by \texttt{hinfstruct} accross 
the different plant and controller orders. \texttt{HIFOO} often leads to a 
slightly worse performance than \texttt{hinfstruct}. In constrast to the 
previous experiment, SOBSYN has the worst $\hinf$ performance for low plant 
orders (up to 50) but for medium to large plant orders, SOBSYN achieves an 
$\hinf$ performance, that ranks between \texttt{hinfstruct} and \texttt{HIFOO}. 
When comparing the $\hinf$ performance of SOBSYN to the other two methods, it is important to note that the other they 
only result in one passive controller in this second experiment (for 
\texttt{HIFOO} at $n=1000$ and $k=5$). We can again expect a worse performance 
after a passivation step is applied as demonstrated in the previous experiment. 
However, this also emphasizes the fact that for certain plants, general 
controllers can lead to a better $\hinf$ performance than controllers that are 
restricted to be passive.

\begin{table*}[htpb]
  \centering
  \caption{$\hinf$ performance and runtimes of \texttt{HIFOO}, \texttt{hinfstruct}, and SOBSYN for the MSD plants at different state dimensions. The runtimes are given in seconds. For $n=2000$, \texttt{HIFOO} failed for all controller orders. We use an Intel\textsuperscript{\textregistered}\,Core\texttrademark\; i9-9900K CPU at 3.60\,GHz with 32\,GB of RAM.}
  \label{tab:msdresults}
  \begin{tabular}{cc|ccc|ccc|ccc}
    &&\multicolumn{3}{c}{\texttt{HIFOO}} & \multicolumn{3}{c}{\texttt{hinstruct}} & \multicolumn{3}{c}{SOBSYN} \\ 
    $n$&& $k=1$ & $k=5$ & $k=10$ & $k=1$ & $k=5$ & $k=10$ &$k=1$ & $k=5$ & $k=10$ \\ \hline
    \multirow{2}{*}{$10$}   & $\hinf$-norm & 5.2e$-$01 & 3.8e$-$01 & 4.4e$-$01 & 4.3e$-$01 & 4.2e$-$01 & 5.1e$-$01 & 4.9e$-$01 & 4.6e$-$01 & 4.6e$-$01\\
                            & runtime      & 3.8e$+$01 & 1.8e$+$01 & 1.2e$+$01 & 7.5e$+$00 & 1.0e$+$00 & 7.4e$-$01 & 8.3e$+$00 & 1.8e$+$02 & 1.2e$+$03\\\hline
    \multirow{2}{*}{$20$}   & $\hinf$-norm & 4.6e$-$01 & 3.0e$-$01 & 3.7e$-$01 & 3.4e$-$01 & 2.9e$-$01 & 3.2e$-$01 & 4.0e$-$01 & 3.9e$-$01 & 3.8e$-$01\\
                            & runtime      & 5.9e$+$01 & 3.3e$+$01 & 1.4e$+$01 & 7.0e$-$01 & 2.4e$+$00 & 1.6e$+$00 & 2.6e$+$01 & 1.2e$+$02 & 7.2e$+$02\\\hline
    \multirow{2}{*}{$50$}   & $\hinf$-norm & 3.2e$-$01 & 3.8e$-$01 & 3.7e$-$01 & 3.3e$-$01 & 3.1e$-$01 & 3.1e$-$01 & 3.9e$-$01 & 3.8e$-$01 & 3.8e$-$01\\
                            & runtime      & 8.5e$+$01 & 9.3e$+$02 & 3.8e$+$01 & 1.7e$+$00 & 2.7e$+$00 & 4.3e$+$00 & 2.0e$+$01 & 2.0e$+$02 & 1.0e$+$03\\\hline
    \multirow{2}{*}{$100$}  & $\hinf$-norm & 4.5e$-$01 & 3.9e$-$01 & 3.5e$-$01 & 3.2e$-$01 & 3.1e$-$01 & 3.0e$-$01 & 3.9e$-$01 & 3.8e$-$01 & 3.8e$-$01\\
                            & runtime      & 1.2e$+$03 & 3.7e$+$03 & 1.7e$+$02 & 1.4e$+$01 & 1.4e$+$01 & 1.4e$+$01 & 1.8e$+$01 & 4.4e$+$02 & 9.3e$+$02\\\hline
    \multirow{2}{*}{$500$}  & $\hinf$-norm & 4.3e$-$01 & 3.9e$-$01 & 3.5e$-$01 & 3.2e$-$01 & 3.1e$-$01 & 3.0e$-$01 & 3.9e$-$01 & 3.8e$-$01 & 3.8e$-$01\\
                            & runtime      & 5.2e$+$03 & 6.4e$+$03 & 7.9e$+$03 & 9.1e$+$02 & 7.8e$+$02 & 9.7e$+$02 & 1.6e$+$01 & 4.6e$+$02 & 9.7e$+$02\\\hline
    \multirow{2}{*}{$1000$} & $\hinf$-norm & 4.7e$-$01 & 4.0e$-$01 & 4.0e$-$01 & 3.2e$-$01 & 3.1e$-$01 & 3.1e$-$01 & 3.9e$-$01 & 3.8e$-$01 & 3.8e$-$01\\
                            & runtime      & 7.3e$+$03 & 9.3e$+$03 & 8.7e$+$03 & 1.1e$+$04 & 7.8e$+$03 & 8.4e$+$03 & 2.1e$+$01 & 3.7e$+$02 & 8.5e$+$02\\\hline
    \multirow{2}{*}{$2000$} & $\hinf$-norm & --- $ $   & --- $ $   & --- $ $   & 3.3e$-$01 & 3.1e$-$01 & 3.1e$-$01 & 3.9e$-$01 & 3.8e$-$01 & 3.8e$-$01\\
                            & runtime      & --- $ $   & --- $ $   & --- $ $   & 1.7e$+$05 & 1.1e$+$05 & 8.7e$+$04 & 2.0e$+$01 & 5.2e$+$02 & 9.5e$+$02
  \end{tabular}
\end{table*}

\section{Conclusion}

We have presented SOBSYN, a new algorithm for the computation of fixed-order pH 
controllers for pH plants that aim at a low $\hinf$ norm of the resulting 
closed-loop system. The main features of our algorithm in comparison to other 
fixed-order $\hinf$ methods are the sample-based objective function and the 
passivity-based stability guarantee. Both features facilitate the application of 
our method to pH plants with high state-space dimension. Moreover, the 
sample-based nature of the objective function and its passivity-based stability 
also enables the computation of $\hinf$ controllers in a purely data-driven way. 
Therefore, we can compute passive $\hinf$ controllers for passive plants even 
if no access to the plant system matrices is possible but only transfer function 
evaluations are available.

While our adaptive sampling procedure works well in our experiments (and also 
in the much larger set of previously conducted model order reduction 
experiments), it is still possible that sharp peaks  
in the spectral norm of the closed-loop frequency response are missed. Therefore, 
it 
is recommended to validate the final controller performance. For small or 
medium systems, this can be done quickly using well-established $\hinf$ norm 
computation. For large-scale systems, an $\hinf$ certificate developed 
in~\cite{Schwerdtner2020a} may be used. We currently investigate the 
incorporation of such a certificate into the sampling stage of our method.

\section*{Acknowledgements}
We thank Volker Mehrmann for his helpful comments on an earlier version of this manuscript.

\bibliographystyle{elsarticle-harv} 
\bibliography{references}


\end{document}